\documentclass[12pt,fullpage]{article}
%%%%%%%%%%%%%%%%%%%%%%%%%%%%%%%%%%%%%%%%%%%%%%%%%%%%%%%%%%%%%%%%%%%%%%%%%%%%%%%%%%%%%%%%%%%%%%%%%%%%%%%%%%%%%%%%%%%%%%%%%%%%%%%%%%%%%%%%%%%%%%%%%%%%%%%%%%%%%%%%%%%%%%%%%%%%%%%%%%%%%%%%%%%%%%%%%%%%%%%%%%%%%%%%%%%%%%%%%%%%%%%%%%%%%%%%%%%%%%%%%%%%%%%%%%%%
\usepackage[top=1in, bottom=1in, left=1in, right=1in]{geometry}
\usepackage{pgfplots}
\usepackage{latexsym} 
\usepackage{amsmath, amsfonts, amssymb, amsthm}
\usepackage{graphicx}
\usepackage{placeins}
\usepackage{graphics}
\usepackage{lscape}
\usepackage{setspace}
\usepackage[bf]{caption}
\usepackage{subcaption}
\usepackage{pdflscape}
\usepackage{dsfont}
\usepackage{float}
\usepackage{soul}
\usepackage{color}
\usepackage{url}
\usepackage{fancyhdr}
\usepackage{multirow}
\usepackage{booktabs, fixltx2e}
\usepackage{threeparttable}
\usepackage{bbm}
\usepackage[normalem]{ulem}
\usepackage{comment}
\usepackage{adjustbox}
\usepackage{multirow}
\usepackage{bm}
\usepackage{lmodern}
\usepackage{longtable}
\usepackage{rotating}
\usepackage{amsmath}
\usepackage{amsthm}
\usepackage{amssymb}
\usepackage{bm}
\usepackage[subtle]{savetrees}
\usepackage[shortlabels]{enumitem}
\usepackage{placeins}
\usepackage{xcolor}
\usepackage{bbm}
\usepackage{pgfgantt}  
\newtheorem{assumption}{Assumption} 
\newtheorem*{assumption*}{Assumption}
\newtheorem{claim}{Claim}[section]
\newtheorem*{claim*}{Claim}
\usepackage{afterpage}
\usepackage{pdfpages}
%% links
%\usepackage[pdftex]{hyperref}
%\hypersetup{colorlinks, citecolor=black, filecolor=blue, linkcolor=blue, urlcolor=blue}
\usepackage[unicode=true,pdfusetitle, bookmarks=true, bookmarksnumbered=false, bookmarksopen=false, breaklinks=false,pdfborder={0 0 1}, backref=false, colorlinks=true]{hyperref}
\hypersetup{citecolor=blue}
\hypersetup{linkcolor=blue}
\usepackage[all]{hypcap}

\usepackage[noabbrev]{cleveref}

%\usepackage{parskip} to remove paragraph indentations...

%\usepackage[style=apa,]{biblatex} %Imports biblatex package
%\nocite{*}
%\addbibresource{sample.bib}

\usepackage[authoryear,sort&compress,round]{natbib}
\bibliographystyle{aer}

\setcounter{MaxMatrixCols}{10}
%TCIDATA{OutputFilter=LATEX.DLL}
%TCIDATA{Version=5.50.0.2960}
%TCIDATA{Codepage=65001}
%TCIDATA{<META NAME="SaveForMode" CONTENT="1">}
%TCIDATA{BibliographyScheme=Manual}
%TCIDATA{LastRevised=Tuesday, July 11, 2017 11:39:33}
%TCIDATA{<META NAME="GraphicsSave" CONTENT="32">}

\onehalfspacing

\setcounter{tocdepth}{4}

\usepackage{titlesec}

\titleformat*{\section}{\large\bfseries}
\titleformat*{\subsection}{\large\bfseries}
\titleformat*{\subsubsection}{\large\bfseries}
% \titleformat*{\paragraph}{\large\bfseries}
% \titleformat*{\subparagraph}{\large\bfseries}
\addtocounter{page}{-1}
\pagenumbering{gobble}
% This allows for long model labels in tables to be wrapped - MS
\usepackage{array}
\newcolumntype{P}[1]{>{\centering\arraybackslash}p{#1}}
\usepackage{adjustbox}

\onehalfspacing

\title{Signaling in the Age of AI: Evidence from Cover Letters \footnote{We are grateful to Elliot Ash for suggesting the research question and Freelancer.com for providing the data. We thank Joe Altonji, Steve Berry, Guillermo Carranza, Winnie van Dijk, Phil Haile, Charles Hodgson, Katja Seim, and participants at seminars for helpful conversations. This work was supported by funding from the Cowles Foundation for Research in Economics at Yale University. \cite{galdin2025making} first measured similarity between cover letters and job posts on Freelancer.com, interpreted it as a signal of worker ability, and showed it predicts hiring. The two research teams later independently conceived and studied the impact of AI, providing complementary evidence.}}
\author{Jingyi Cui, Gabriel Dias, and Justin Ye \footnote{Yale Department of Economics; Contact information: \href{mailto:jingyi.cui@yale.edu}{jingyi.cui@yale.edu}, \href{mailto:gabriel.santamarina@yale.edu}{gabriel.santamarina@yale.edu}, and \href{mailto:justin.ye@yale.edu}{justin.ye@yale.edu}.}}
\date{\today}

\begin{document}

\maketitle

\vspace{0.25cm}
\begin{center}

\end{center}

\vspace{1cm}

\begin{abstract}

We study the impact of generative AI on labor market signaling using the introduction of an AI-powered cover letter writing tool on a large online labor platform. Our data track both access to the tool and usage at the application level. Difference-in-differences estimates show that access to the tool increased textual alignment between cover letters and job posts and raised callback rates. Time spent editing AI-generated cover letter drafts is positively correlated with hiring success. After the tool’s introduction, the correlation between cover letters’ textual alignment and callbacks fell by 51\%, consistent with what theory predicts if the AI technology reduces the signal content of cover letters. In response, employers shifted toward alternative signals, including workers’ prior work histories.

\end{abstract}

% \tableofcontents
% \listoftables
% \listoffigures

\newpage
\pagenumbering{arabic}

\setcounter{page}{1}

\section{Introduction}

Job seekers and college applicants increasingly turn to generative artificial intelligence (AI) tools like ChatGPT to edit résumés, craft cover letters, and prepare for interviews.\footnote{This trend has received widespread media attention. Recent headlines include “AI is supposed to make applying to jobs easier—but it might be creating another problem” (NBC News, November 17, 2024) and “AI Is Taking Over College Admissions” (The Nation, October 11, 2024). In 2025, LinkedIn reported a 45 percent year-over-year increase in job applications, with generative AI widely believed to be a key driver of the surge (“Employers Are Buried in A.I.-Generated Résumés,” The New York Times, June 21, 2025).} Despite the transformative potential of this technology for processes central to economic opportunities and social mobility, credible empirical evidence on how AI affects labor market signaling remains scarce.\footnote{Notable exceptions include \cite{cowgill2024does}, who provide evidence on applicants' use of generative AI in a lab setting, and \cite{wiles2025generative}, who study the effect of employers' use of generative AI tools to write job posts on an similar online labor platform. } The main obstacle is observability: outside of controlled lab settings, just as employers struggle to identify AI-assisted applications, so too do researchers. Current detection tools are error-prone and unreliable (\cite{weber2023testing}), making it difficult to assess AI assistance’s real-world impact on hiring.

We overcome this challenge by studying the introduction of a generative AI cover letter writing tool on Freelancer.com, one of the world's largest online labor platforms. Freelancer connects international workers and employers to collaborate on short-term, skilled, and mostly remote jobs. On April 19, 2023, Freelancer introduced the ``AI Bid Writer," a tool that automatically generates cover letters tailored to employers' job descriptions that workers can use or edit. The tool was available to a large subset of workers depending on their membership plans. Employers, as is typical in most real-world settings, could not observe which workers had access to the tool or whether a given cover letter was AI-assisted. 

We use eight months of data from two leading skill categories, PHP and Internet Marketing, covering 5 million cover letters submitted to over 100,000 jobs.\footnote{PHP, or Hypertext Preprocessor, is a widely used open-source scripting language designed for web development, enabling the creation of dynamic websites and applications.} We observe access to the tool and usage of the tool at the application level. We link access and usage information to the submitted cover letters and to callback and hiring outcomes to analyze the effect of generative AI at both the individual level and the market level. We also observe timestamps of workers’ clicks on the AI tool and their application submissions, allowing us to measure the time they spent editing the AI-generated texts and examine the effect of human-AI interaction.

The tool's rollout and the granularity of the data enable us to answer fundamental questions raised by the widespread use of AI in job applications: (i) Does generative AI writing assistance improve an applicant's chances of receiving a callback? (ii) Does AI complement or substitute existing writing ability? (iii) As AI adoption grows, do employers rely less on cover letters as signals, and do other signals become more important? (iv) Does AI adoption affect overall hiring rates? (v) Do workers revise the AI drafts and do their revisions matter for hiring outcomes? 

The platform-native tool was adopted by a meaningful share of workers. In our sample, $84.9\%$ of all bids were submitted by workers on membership plans with access to the tool. Following its launch, 62\% of eligible users used it at least once in their applications, and 17\% of all cover letters were written with its assistance. Studying the rollout of a tool that directly affected a sizable fraction of job applications allows us to examine not only the individual-level impact of generative AI but also its market-level consequences, including how it reshaped the informativeness of cover letters as a signal.

To measure how closely cover letters are tailored to job descriptions, we apply Term Frequency–Inverse Document Frequency (TF-IDF), a standard Natural Language Processing technique that captures the extent of keyword overlap between two texts. Using data from before the introduction of the AI writing tool, we show that our TF-IDF–based measure of cover-letter tailoring similarly predicts callbacks and awards—both across and within workers—providing complementary evidence that employers value cover letters that engage with job content. %Our approach is related to, but distinct from, \citet{silbert}, who use data from the same platform and ask a Large Language Model (LLM) whether a given cover letter aligned with a job posting. They show that the LLM’s estimated probability of a “yes” response predicts hiring outcomes.\footnote{They use jobs listed under the skill categories PHP, Python, and Data Entry, posted between January 1, 2019 and February 29, 2020—a period earlier than our data sample.} 

We begin with a simple theoretical framework that generates predictions about the effects of AI on workers and on cover letters as a labor market signal. The model assumes that cover-letter quality is correlated with workers’ underlying productivity and that generative AI raises cover-letter quality by a constant. We show that when generative AI becomes available to a subset of workers, the expected productivity conditional on a given level of cover-letter quality decreases, and the correlation between expected productivity and cover-letter quality also declines. We then extend the model to employers making multinomial choices among applicants. Numerical simulations show that workers with access to AI experience an increase in hiring probability while workers without access experience a decrease. Moreover, cover letters become less correlated with hiring after AI is introduced. We test these implications in the data, exploiting the introduction of the AI cover-letter writing tool.

We first examine the effect of gaining access to the generative AI tool on cover letter tailoring and on workers' chances of receiving a callback. We use a difference-in-differences design comparing workers with and without access. Identification of the causal effect of the platform-native AI writing tool is complicated by the release of GPT-4, a general-purpose generative AI technology, one month before the tool’s launch. We address this by allowing for heterogeneous effects of GPT-4 by workers’ treatment status—whether they were on membership plans that granted access to the platform-native tool. We formalize the identification assumptions within an econometric framework. 

We find that access to the generative AI writing tool increased cover-letter tailoring by 0.16 standard deviations, while actual usage raised tailoring by 1.36 standard deviations. Applying the same design to callbacks as the outcome, we find that access to the generative AI tool increased the probability of receiving a callback by 0.43 percentage points, and usage raised it by 3.56 percentage points. The latter represents a 51\% increase relative to the pre-rollout average callback rate of 7.02\%. However, the results for callbacks are significant at the $10\%$ level but not at the $5\%$ level. A dynamic effect specification shows that the effect on callbacks tapered off after two months. These results suggest that generic or poorly matched cover letters present a significant barrier to receiving callbacks. In contrast, the estimates on actual awards are too imprecise to draw definitive conclusions. 

Our second finding is that AI substitutes for, rather than complements, workers' pre-AI cover letter tailoring skills. We compute workers’ average tailoring scores prior to the tool’s introduction and study the heterogeneous effects of access to the tool by pre-existing writing ability. We find that workers who previously wrote more tailored cover letters experienced smaller gains in cover letter tailoring—indeed, the best writers gained half as much as the weakest ones. By enabling less skilled writers to produce more tailored cover letters, AI narrows the gap between workers with different initial abilities. 

Because AI narrows differences across workers, the informativeness of cover letters as a signaling device might be reduced. We test this directly by looking at how the correlation between cover letter tailoring and hiring changed following the introduction of the AI tool. The correlation between cover-letter tailoring and receiving a callback fell by 51\% after the launch of the AI tool, and the correlation with receiving an offer fell by 79\%. Instead, employers shifted toward other signals less susceptible to AI influence, such as workers’ past work experience. The correlation between callbacks and workers’ review scores—the platform’s proprietary metric summarizing past work experiences on the platform and determining the default ranking of applications—rose by 5\%. These patterns suggest that as AI adoption increases, employers substitute away from easily manipulated signals like cover letters toward harder-to-fake indicators of quality. This shift may have important implications, including favoring established workers at the expense of newcomers trying to enter the market.

A natural question is whether the decline in cover-letter signaling, together with employers’ partial shift toward alternative signals, affected overall market outcomes such as hiring, interviewing, and job completion. We find no evidence of changes in these outcomes following the tool’s introduction, though we view this as a preliminary result that may not capture the broader impact of AI once it extends to aspects of the job application process beyond cover letters.

Finally, we examine the extent of human-AI interaction in our context. Although the tool auto-generates letters tailored to job posts, workers are free to edit the drafts. Using timestamps of workers’ clicks on the AI tool and their subsequent application submissions, we construct a measure of editing time. We find that most AI-generated cover letters were submitted with little to no modification. However, workers with higher pre-AI writing ability tend to spend more time editing AI drafts. Examining the relationship between editing and outcomes, we show that—controlling for worker fixed effects—a one–standard deviation increase in editing time is associated with a 0.31 percentage point rise in the probability of receiving a job offer, equal to about 52\% of the average offer rate among AI-assisted applications. These results indicate that human revision enhances the effectiveness of AI-generated cover letters, contributing to a growing body of evidence on human–AI collaboration.

\paragraph{Related Literature}

\citet{galdin2025making} offer complementary evidence on the effects of generative AI on signaling in the same labor market. They use an LLM-based measure of cover letter quality and find, consistent with our results, that its correlation with application outcomes declined after AI tools became available. While we provide difference-in-differences evidence on the platform policy and evidence on human–AI interaction, they estimate an equilibrium model and perform counterfactual simulations.

%Our paper builds on \citet{silbert}, who study how market congestion affects workers' effort in customizing cover letters to signal their fit with jobs, using data from Freelancer. They are the first to measure alignment between cover letters and job posts in this context—employing a different method from our TF-IDF-based measure—and show that cover letter alignment predicts employers' hiring decisions. To analyze the effect of market congestion, \citet{silbert} use a structural signaling model and note that their framework can simulate counterfactual scenarios in which workers use AI assistance. In contrast, we leverage the real-world introduction of an on-platform AI writing tool to directly evaluate its impact on cover letter quality and effectiveness.

Another closely related paper is \citet{cowgill2024does}, who examine the impact of generative AI on job applications and startup pitches through a survey experiment. They find that generative AI reduces employers' and investors' screening accuracy. Our study explores similar questions in an online labor market, examining the effect of AI on actual hiring decisions. An additional important difference between the settings is that \citet{cowgill2024does} informed employers whether applicants used generative AI, whereas, in many real-world contexts---including ours---employers typically cannot discern whether an applicant received assistance from AI. As a result, our findings are not directly comparable.  
    
In a similar labor market, \citet{wiles2023algorithmic} study the effect of a non-generative algorithmic writing assistance on employment through an experiment. The intervention they study focuses mostly on correcting spelling and grammar errors. They show that the tool improved treated workers' employment outcomes without affecting the employers' satisfaction. While we also investigate the role of AI---although of a different nature---in workers' employment outcomes, we are additionally able to speak to the market-level outcomes associated with the introduction of generative AI. Another related paper is \cite{wiles2025generative}, who study generative AI assistance to employers for crafting job posts in a similar context and find that while the tool increased the number of job posts, it did not increase the number of matches. Our work is complementary to theirs as both sides of the labor market---firms and workers---increasingly adopt generative AI tools in matching. 

%Our findings on the effectiveness of workers' time spent editing the AI-generated cover letter drafts contribute to a growing body of work on human-AI interaction. In economics, \cite{agarwal2023combining} study the interaction between AI and professional radiologists for disease diagnosis. In computer science, \cite{hosanagar2024designing} and \cite{chakrabarty2025can} show effectiveness of human editing of LLM-generated texts in lab settings.

%More broadly, our paper contributes to the literature on AI’s impact on labor markets. Papers including \cite{agrawal2019artificial},  \cite{noy2023experimental}, \cite{brynjolfsson2025generative}, and \cite{acemoglu2025simple} study AI's impact on labor productivity and innovation. Our focus on labor market signaling highlights an alternative mechanism in which AI can profoundly reshape labor markets.

Also related is work that examines the functioning of online labor markets and their broader economic implications. \citet{horton2010online}, \citet{chen2016are}, and \citet{stanton2020gig} document how these markets operate, the behavioral responses of workers to wage changes, and the significance of platform-mediated employment. \citet{stanton2025who} provide evidence on the distributional consequences of gig platforms, showing how access to online markets affects both workers and firms, and highlighting heterogeneity in who ultimately benefits from platform participation. This focus on heterogeneous impacts complements theoretical and empirical work such as \citet{horton2024sorting} and \citet{allman2025signaling}, who analyze frictions and signaling mechanisms in matching markets. Our paper extends this line of inquiry by examining how generative AI reshapes the informativeness of signals on these platforms. More broadly, our contribution also connects to research on AI’s impact on labor markets: \citet{agrawal2019artificial}, \citet{noy2023experimental}, \citet{brynjolfsson2025generative}, and \citet{acemoglu2025simple} show how AI affects productivity and innovation, while we highlight an alternative channel through labor market signaling. Also related is \citet{brinatti2023international}, who use data from a global online labor platform to study wage determination in remote work.

Recent research highlights the disruptive effects of generative AI on worker behavior and employer decision-making. \citet{dillon2025shifting} show that generative AI reshapes productivity, information processing, and work patterns among knowledge workers. Studies closer to our context, such as \citet{cao2025fairness}, \citet{cao2025aiwrites}, and \citet{chen2025mediocrity}, demonstrate how AI adoption changes labor market dynamics in distinct ways: \citet{cao2025fairness} highlight fairness concerns in algorithmic decision-making and show that AI tools can exacerbate existing inequalities if not carefully designed; \citet{cao2025aiwrites} document how the availability of AI writing assistance reshapes workers’ incentives to invest in communication skills and alters the distribution of opportunities across applicants; and \citet{chen2025mediocrity} argue that widespread AI use may trap workers in a “mediocrity equilibrium,” reducing incentives for effort and originality. Together, these studies underscore that the effects of AI adoption extend beyond productivity, fundamentally altering skill demand, redistributing opportunities, and influencing worker motivation. In contrast, our findings show that in the context of online labor markets, generative AI directly reshapes workers’ ability to signal their suitability to employers through cover letters, a mechanism that also connects to the broader literature on human–AI interaction. In economics more broadly, \citet{agarwal2023combining} study collaboration between AI and radiologists in medical diagnosis, while in computer science, \citet{hosanagar2024designing} and \citet{chakrabarty2025can} demonstrate how human editing can improve the quality of LLM-generated texts. Together, these strands situate our contribution within both the literature on generative AI in labor markets and the broader study of human–AI collaboration.

In developing our model, we build most directly on the framework of \citet{cowgill2024does}, who study how the introduction of generative AI affects the informativeness of job applications. Their analysis shows that when AI assistance becomes available, application materials become less revealing of underlying worker quality, reshaping how employers interpret signals in hiring. Our model is also similar to that of \citet{frankel2019muddled} in which quality can be interpreted as a natural action and access to AI gaming ability. Together, these papers highlight the central insight that new technologies or multidimensional private information can weaken the mapping between observed signals and underlying productivity, and our model applies this perspective to the case of AI-assisted cover letters.

\paragraph{Organization} Section \ref{sec: context} describes the market and the rollout of the AI tool. Section \ref{sec: data} presents the data, our measure of cover letter tailoring, and tool adoption patterns. Section \ref{sec: model} presents a simple theoretical framework that makes predictions about the effects of the AI tool. Section \ref{sec:pe} studies the individual-level impact of generative AI on cover letters and hiring outcomes. Section \ref{sec: ge} turns to market-level consequences, including those on cover letters as a screening device. Section \ref{sec: human_ai} examines the human-AI interaction. Section \ref{sec: conclusion} concludes. 

\section{\label{sec: context} Context}

In this section, we describe the platform and the generative AI–powered cover-letter writing tool that constitute the focus of our analysis.

\subsection{Freelancer.com}

Our analysis employs data from Freelancer.com, which ranks among the world's largest online task-based labor markets both in terms of user base and the volume of posted jobs. Platforms like Freelancer.com, Upwork, and Fiverr serve as digital marketplaces that connect workers and employers worldwide, facilitating the exchange of remotely delivered tasks such as software development, sales and marketing support, and creative work—commonly referred to as “freelancing.” Since its inception in 2009, the platform has attracted over 83 million users from 247 countries, regions, and territories, with over \$1 billion in gross payments transacted. 

On Freelancer.com, the matching process unfolds in five main steps. First, firms or individual employers seeking to outsource a task post a project— also referred to as a ‘job’ in this paper—on the platform. The employer describes the job and specifies a minimum budget, which is the lowest amount a worker can bid. Most jobs are fixed-budget rather than hourly. We interpret the minimum budget as a measure of the job’s size, and in all subsequent analyses we normalize workers’ wage bids by this minimum budget. Second, workers with skills matching the job requirements submit bids on these jobs. Bidding for jobs on Freelancer.com is free for workers; however, the number of bids a worker can submit each month is limited. Workers can increase their monthly bidding allowance by subscribing to one of the platform’s membership plans (see https://www.freelancer.com/membership/). Each bid typically comprises two components: the proposed payment amount and a written job proposal, which serves as a cover letter outlining the worker’s suitability for the task. Third, based on the set of submitted bids and the characteristics of the competing workers, the platform employs a recommendation algorithm (as a function of the volume and strength of a bidder's past reviews, referred to henceforth as the ``bidder review score") to rank the bids for employers. Fourth, employers review the bids, displayed in order of ranking, and select the most appealing one, initiating the work process. Alternatively, employers may opt to leave the job unfilled if none of the bids meet their requirements. During this process, employers can and often reach out to a subset of bidders for a private conversation, which we refer to as a callback. Finally, after the task is completed, both the worker and the employer rate each other’s performance by assigning a star rating, which contributes to their respective reputations on the platform. Payments are handled through the platform and are typically divided into milestone payments.

\subsection{The AI Bid Writer}

On April 19, 2023, Freelancer.com launched the AI Bid Writer. With a single click, the tool automatically generates a cover letter tailored to the job description, thereby reducing the effort and time required for freelancers to bid. While the auto-generated cover letters are ready for immediate submission, freelancers are free to edit the text before submission.

The release of the AI Bid Writer was unannounced beforehand, ensuring that its adoption was unaffected by pre-release anticipation. The tool is available exclusively to users subscribed to the Plus or higher membership tiers. The Plus plan costs \$8.95 per month and permits up to 100 bids, compared with 6 bids for free members and 50 bids for Basic members, who pay \$4.99 per month. Using the AI Bid Writer incurs no additional cost\footnote{Further information about the feature and its functionality is provided on Freelancer.com’s support page: AI Bid Writer Overview (https://www.freelancer.com/support/job/ai-bid-writer).}.

In our data, we observe workers’ membership plans—and thus whether they had access to the AI Bid Writer—as well as whether they clicked on the tool’s button for each bid submitted. Importantly, this information is not available to employers: they neither know workers’ membership plans nor observe whether a cover letter was generated using the AI Bid Writer. This feature is essential for the interpretation of our results: cover letters that were customized with and without AI could appear similar to the employers, leading employers to reconsider the informational value contained in well-tailored cover letters. This inability to detect the use of AI tools is also a common feature in other real-world contexts, such as college applications.

The AI Bid Writer tool was not the only generative AI tool that workers had access to in the sample period. One alternative is GPT-4, a general-purpose AI tool that OpenAI released on March 14, 2023, just over a month before the launch of the AI Bid Writer. Compared to its predecessor GPT-3.5 (released prior to our sample period in November 30, 2022), GPT-4 exhibits substantially enhanced reasoning. GPT‑4’s gains in logical coherence, nuanced query interpretation, and adherence to complex, long‑form instructions are well-documented in benchmark evaluations (\citet{OpenAI2023GPT4}). Beyond raw reasoning, GPT-4 also shows advances on writing tasks, including cover-letter generation, and exhibits emergent theory-of-mind behavior that may support more audience-aware drafting (\citet{Deveci2023CoverLetters}, \citet{Sikander2023Intro}, \citet{Kosinski2023ToM}).

We believe that the platform-native tool was comparable to GPT-4 in terms of the quality of the cover letters it produced. The main differences lay in cost and ease of use: while the AI Bid Writer was free for workers on qualifying membership plans, access to GPT-4 required a \$20 monthly subscription. In addition, the AI Bid Writer automatically reads the job posting and generates a cover letter with a single click, whereas freelancers using GPT-4 would need to copy and paste the job description into the OpenAI interface or develop a script to automate the process—potentially incurring additional costs through the ChatGPT API. Due to these advantages of the AI Bid Writer, the tool was used by a substantial set of workers (which we show in Section \ref{sec: adoption}) despite the concurrent availability of other tools like GPT-4. 

In our analyses, we explicitly account for the GPT-4 launch when estimating the effects of the platform-native AI writing tool. We discuss the assumptions in an econometric framework in Section \ref{sec: econometrics}.

%Given this, we must explicitly account for the GPT‑4 rollout when estimating the effects of our platform’s AI writing tool. Our diff-in-diff identification comes from the comparison of two groups (members with tool access vs. non‑members) across two periods (before vs. after the platform rollout), so heterogeneous pre‑treatment use of an analogous external technology (GPT‑4) could confound our estimates. To address this, we include an indicator variable for the official GPT‑4 release date and allow it to interact with the treatment indicator. The intuition is that membership status may correlate with unobservables —such as greater interest in the jobs and/or technical savvy—that both drive experimentation with GPT‑4 and affect cover‑letter quality. For example, workers who pay for membership might take platform jobs more seriously and thus be more inclined to leverage GPT‑4 when it became available. 

%Once we control for the GPT‑4 effect, our specification isolates the incremental impact of the native tool across worker groups. One concern would be that the treatment effect is absorbed by the GPT‑4 control: if workers are already using AI to write cover letters the introduction of the tool may have had a null effect. However, we observe (i) actual clicks on the tool, which suggests real individual gains, and (ii) significant treatment effects on auction outcomes—suggestive evidence that this specialized, platform‑native tool effectively altered market outcomes on the platform.

\FloatBarrier

\section{\label{sec: data} Data Patterns}

This section begins by presenting the data, sample restrictions, and summary statistics. We then describe our measure of cover letter tailoring and demonstrate that it predicts hiring outcomes. Next, we document adoption patterns of the AI Bid Writer tool and examine whether its launch coincided with changes in membership plan purchases or job postings.

\subsection{Data Description and Sample Restrictions}
Our analysis focuses on all bids submitted to jobs requiring skills categorized under ``PHP" and ``Internet Marketing" (two of the most popular skill categories) during the 240-day period period between January 19, 2023, and September 15, 2023. We exclude bids where the proposed wage exceeds 30 times the minimum job budget, as these outliers likely do not reflect typical bidding behavior. Finally, we omit jobs that received only a single bid, as these are likely direct offers.

The analytic sample comprises of 5,499,707 bids submitted by 264,082 unique bidders across a total of 106,714 jobs. Wage bids average 3.349 times the minimum job budget, reflecting a range of pricing strategies. At the bidder level, the average number of bids submitted during the sample period is 20.8, though this figure masks substantial heterogeneity. A minority of highly active bidders account for the majority of bids, reflecting a skewed distribution of activity. On average, bidders receive 0.11 awards in PHP and Internet Marketing jobs during the sample period. At the job level, the average job receives 51.5 bids, underscoring the competitive nature of the platform. 

%Of all jobs posted, 28.0\% are awarded, 23.3\% are accepted by the winning bidder, and 17.9\% are ultimately completed.

In Section 5, when examining the individual impact of the generative AI tool, we rely on a difference-in-differences specification examining the within-worker change in outcomes between the workers who had access to the AI tool versus those who did not. Consequently, we omit bids from individuals whose access to the AI tool changed during the sample period, as well as from those who were not active both before and after the intervention. These restrictions yield a final sample of 2,511,592 bids from 17,759 workers.

In Section 6, when examining the market-level impact of the generative AI tool, we employ the full sample of bids. 

\subsection{TF-IDF: a Measure of Cover Letter Tailoring}

Because the AI Bid Writer automatically produces cover letters customized to job posts, it dramatically reduces the costs of tailoring applications. To measure cover letter tailoring, we focus on textual alignment, which captures the degree of overlap in keywords between a cover letter and a job description. The underlying assumption is that more tailored cover letters should exhibit greater textual similarity with the job posting, reflecting the freelancer’s effort in addressing the specific requirements of the employer. On the contrary, generic cover letters which are re-used across jobs would exhibit lower textual similarity. 

To quantify this, we utilize Term Frequency-Inverse Document Frequency (TF-IDF), a simple text vectorization technique that assigns weights to words based on their importance within a document while accounting for their overall prevalence across the corpus. Specifically, following a pre-processing step to remove stopwords, we construct TFIDF vectors for each job post and cover letter. Each element in the TFIDF vector represents the weighted frequency of a word within the text. The TFIDF score for a given word \( w \) in document \( d \) is computed as:
\begin{align*}
    \text{TFIDF}(w, d) = \text{TF}(w, d) \times \text{IDF}(w)
\end{align*}
where $\text{TF}(w, d)$ represents the term frequency of word $w$ in document $d$, and $ \text{IDF}(w)$ is the inverse document frequency, given by:
\begin{align*}
    \text{IDF}(w) = \log \left( \frac{N}{1 + n_w} \right)
\end{align*}
where \( N \) is the total number of documents in the corpus, and \( n_w \) is the number of documents containing word \( w \).

After computing the TFIDF vectors for both the job description and the bid proposal, we measure their cosine similarity to obtain a numerical representation of textual alignment. The cosine similarity between two TFIDF vectors, \( v_p \) (for the job description) and \( v_b \) (for the cover letter), is defined as:
\begin{equation}
    \text{Cosine Similarity} = \frac{v_p \cdot v_b}{\| v_p \| \| v_b \|}
\end{equation}
This measure ranges from \( 0 \) to \( 1 \), where a value closer to \( 1 \) indicates a higher degree of textual similarity between the job description and the cover letter, while a value near \( 0 \) suggests little to no overlap.

\begin{table}[hbtp]
\centering

\caption{Cover Letter Tailoring and Bid Success}
\label{tab:tfidf_valid}
\begin{tabular}{lcc} 
\toprule 
& \multicolumn{2}{c}{Offer} \\ 
 & (1)  & (2)  \\ 
\hline
Cover Letter Tailoring  & 0.0078***  & 0.0045*** \\
& (0.0007) & (0.0009) \\ \\
Wage & -0.0006***  & -0.0005*** \\
& (0.00003) & (0.00004) \\ \\ 
Experience & -0.0140*** & 0.0045*** \\ 
& (0.0003) & (0.0012) \\ 
\hline
Job FE & \checkmark & \checkmark \\
Bidder FE & & \checkmark \\
Outcome Mean & 0.0068 & 0.0068 \\ 
Outcome Std. Dev & 0.0820 & 0.0820\\
\hline
\textit{N} & 1,005,483 & 1,005,483 \\
\bottomrule
\end{tabular}
\begin{tablenotes}
    \small \item \textit{Table notes:} The table reports regression results where the dependent variable is an indicator for whether the bid is suceessful. Column (1) reports estimates controlling for job fixed effects, while Column (2) reports estimates additionally controlling for bidder fixed effects. Standard errors are reported in parentheses and clustered at the job level. Cover letter tailoring is measured as textual similarity—the cosine similarity between TF-IDF vectors of a job proposal and a job description. Wage is normalized as the wage bid divided by the minimum job budget. Experience is calculated as the ranking percentile of a platform-calculated bidder review score among bids on the job. *** p $<$ 0.01, ** p $<$ 0.05, * p $<$ 0.1.
    \end{tablenotes}

\end{table}

By constructing this textual alignment measure, we obtain a quantitative metric of how closely a freelancer’s cover letter mirrors the employer’s job posting, allowing for an analysis of how cover letter quality evolves over time and in response to AI. This measure, while simple, has several advantages. First, it is highly computationally inexpensive. Second, it is largely insensitive to word count and is normalized by document length—hence, variation in the word counts of job descriptions or cover letters do not introduce bias. Third, it is insensitive to word order or sentence structure, and hence explicitly focuses on content similarity while ignoring stylistic variations.\footnote{\cite{wiles2023algorithmic} study the impact of an AI tool that primarily corrects spelling and grammar errors in a similar context.} 

We validate that in the pre-rollout sample, our measure of textual similarity is significantly predictive of hiring. In Table \ref{tab:tfidf_valid}, we regress the outcome of whether a bid is selected on the cover letter's textual similarity score, the worker's level of experience (captured by the review score), and the normalized wage (the bid amount divided by the minimum job budget). Accounting for job fixed effects, a one standard deviation increase in textual similarity (0.134) corresponds to an increase in absolute selection probability of 0.10 percentage points, or a 15.46\% increase in the probability of being selected compared to the raw mean. This positive correlation is robust to the inclusion of worker fixed effects. Together, these results support the interpretation of cover-letter tailoring as a labor-market signal—namely, that firms use it to form beliefs about otherwise unobservable worker attributes, such as productivity, interest in the job, or match-specific quality.

Appendix Table \ref{tab:similarity_scores} includes examples of textual similarity scores for selected bids on a job. While our measure of textual alignment between cover letters and job postings captures the dimension most directly affected by the AI Bid Writer, we acknowledge that cover letters also vary along other important dimensions—such as style—that our measure does not capture. Assessing the impact of generative AI tools on these additional aspects of cover letters lies beyond the scope of this paper and is left for future work.

\FloatBarrier

\subsection{Tool Adoption Patterns \label{sec: adoption}}

\begin{figure}[htbp]
  \centering 
  \caption{Worker Adoption Rates: Share of Bids Submitted with AI Assistance}
  \includegraphics[width=0.8\textwidth]{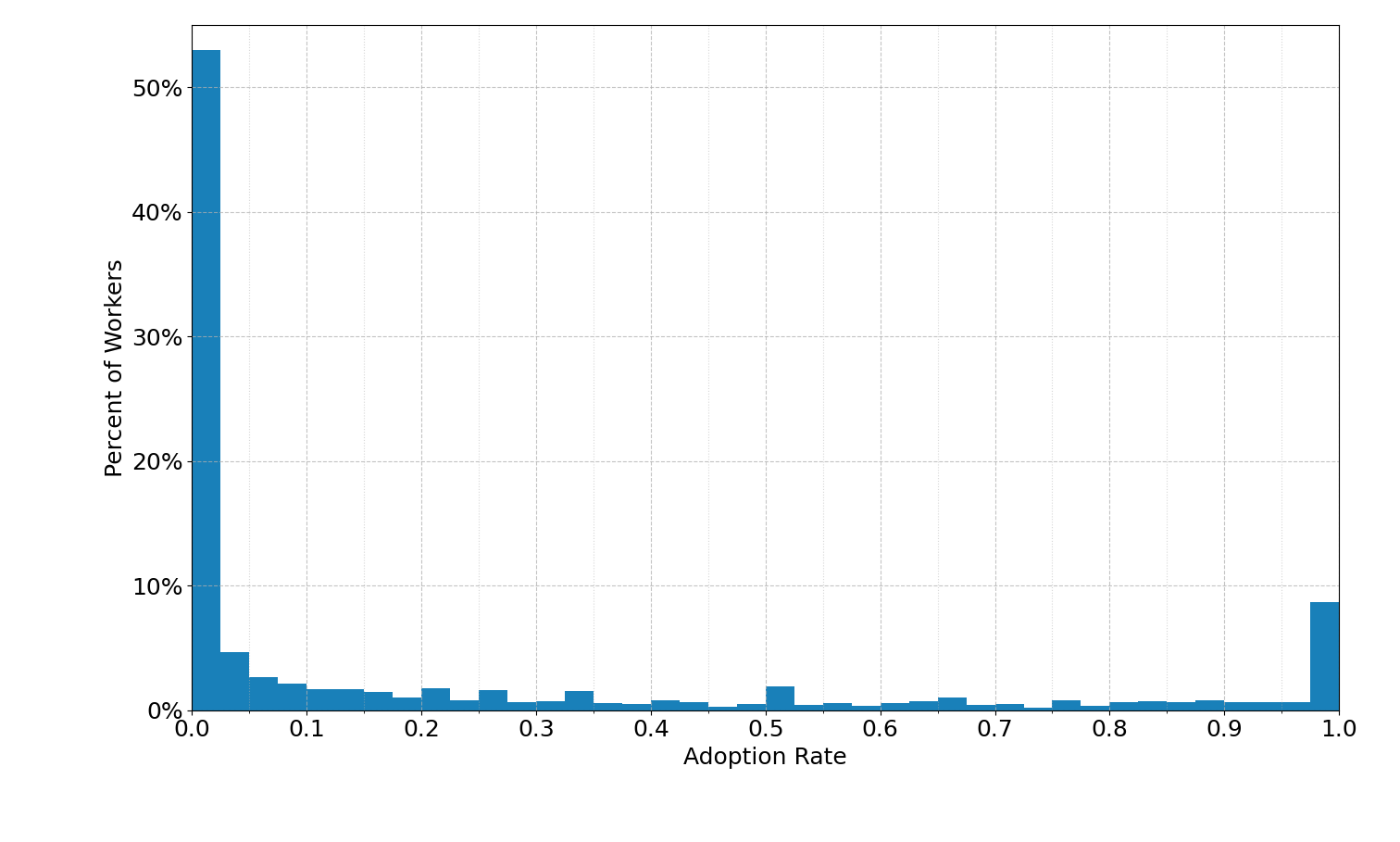}

  \label{fig:histogram-adoption}  

    \begin{tablenotes}
    \small\item \textit{Figure Notes}: Adoption rates are measured at the worker level as the fraction of bids submitted with the AI tool during the period when it was available. The sample is restricted to workers who had access to the tool.
    \end{tablenotes}
    
\end{figure}

Using data on workers’ membership plans and per-application usage of the AI Bid Writer, we analyze adoption of the tool. We find that 61.9\% of workers with access tried it at least once. Figure~\ref{fig:histogram-adoption} presents the histogram of adoption rates among these users. The distribution is bimodal: about 39\% of workers used the tool in fewer than 10\% of their bids, while more than 17\% relied on it in over 90\% of their bids. This wide dispersion underscores the heterogeneous nature of AI adoption, with workers pursuing different strategies regarding when and how frequently to incorporate AI into their cover letters.

In the post-AI sample, about 19\% of eligible bids—those submitted by workers with access to the tool—were written with assistance from the AI Bid Writer. Since these workers account for the vast majority of bids in our data, 16.71\% of all bids were AI-assisted. This substantial adoption provides the basis for analyzing the market-level effects of AI in Section \ref{sec: ge}.

\FloatBarrier

\subsection{Trends in User Memberships and Job Postings}

In this section, we examine whether the launch of the AI Bid Writer coincided with changes in membership plan purchases or in the number of projects posted.

The AI cover-letter writing tool was available to workers on the Plus and higher membership tiers. A potential concern is that workers might have adjusted their membership choices in response to the feature. About $85\%$ of bids in our sample are submitted by workers on membership plans that include access to the tool. Figure \ref{fig:freq_notbasic} shows that the number of purchases and renewals of AI-eligible plans among workers in our data sample remained stable around the tool’s introduction, aside from a seasonal dip during the summer weeks. Figure \ref{fig:frac_notbasic} plots the share of AI-eligible plans among all membership purchases, which is stable across the full sample. Together, this evidence suggests that the tool’s introduction did not materially affect selection into membership tiers—consistent with the fact that the AI tool was only one of many benefits bundled into higher-tier plans, the most prominent being the larger number of bids workers could submit each month.

\begin{figure}[h]
    %\centering
    \caption{Purchases of Membership Plans with Access to AI Tool}
    \begin{subfigure}{0.50\textwidth}
        \centering
        \includegraphics[width=\textwidth]{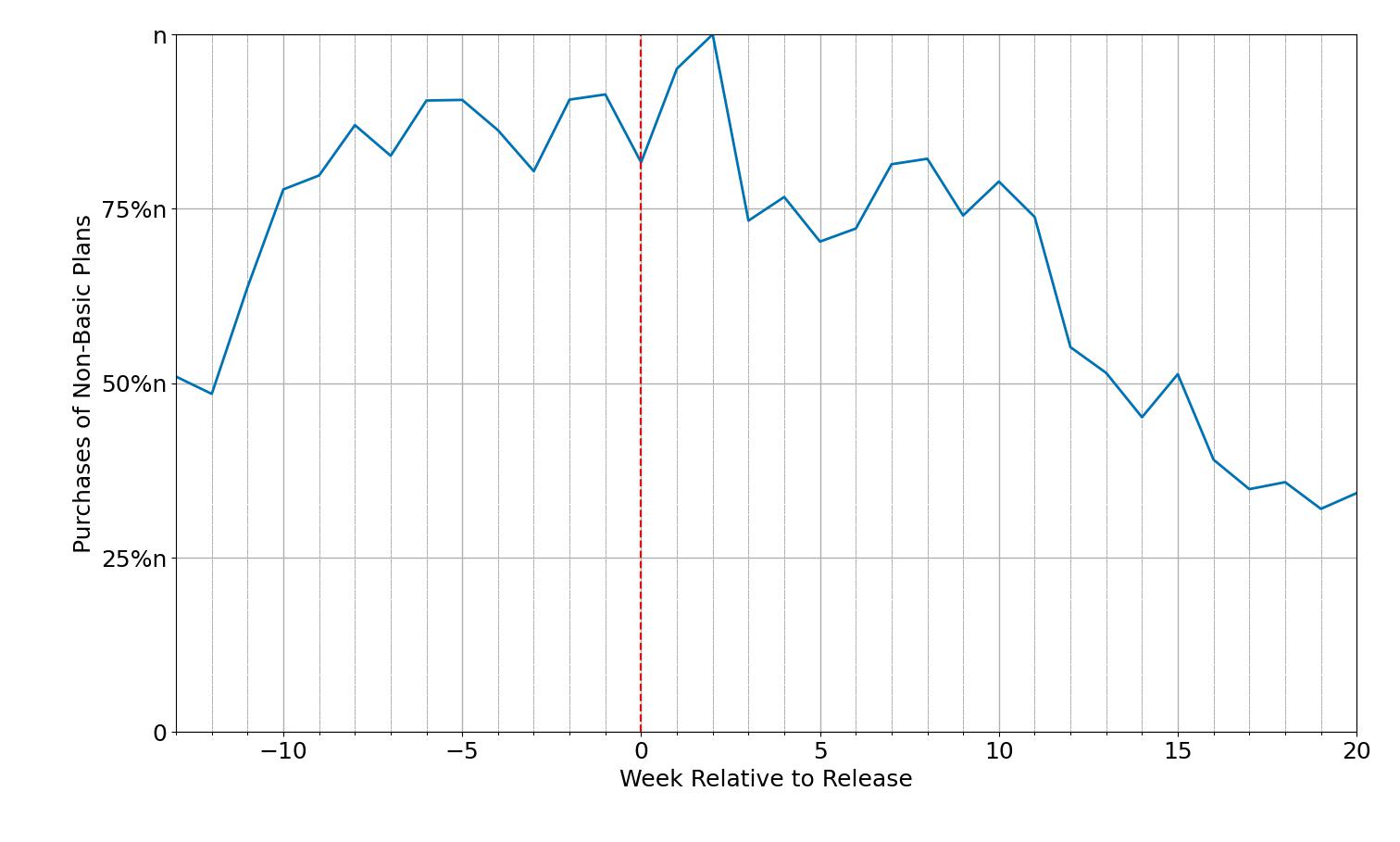}
        \caption{Number of Purchases}
        \label{fig:freq_notbasic}
    \end{subfigure}
    \hfill
    \begin{subfigure}{0.50\textwidth}
        \centering
        \includegraphics[width=\textwidth]{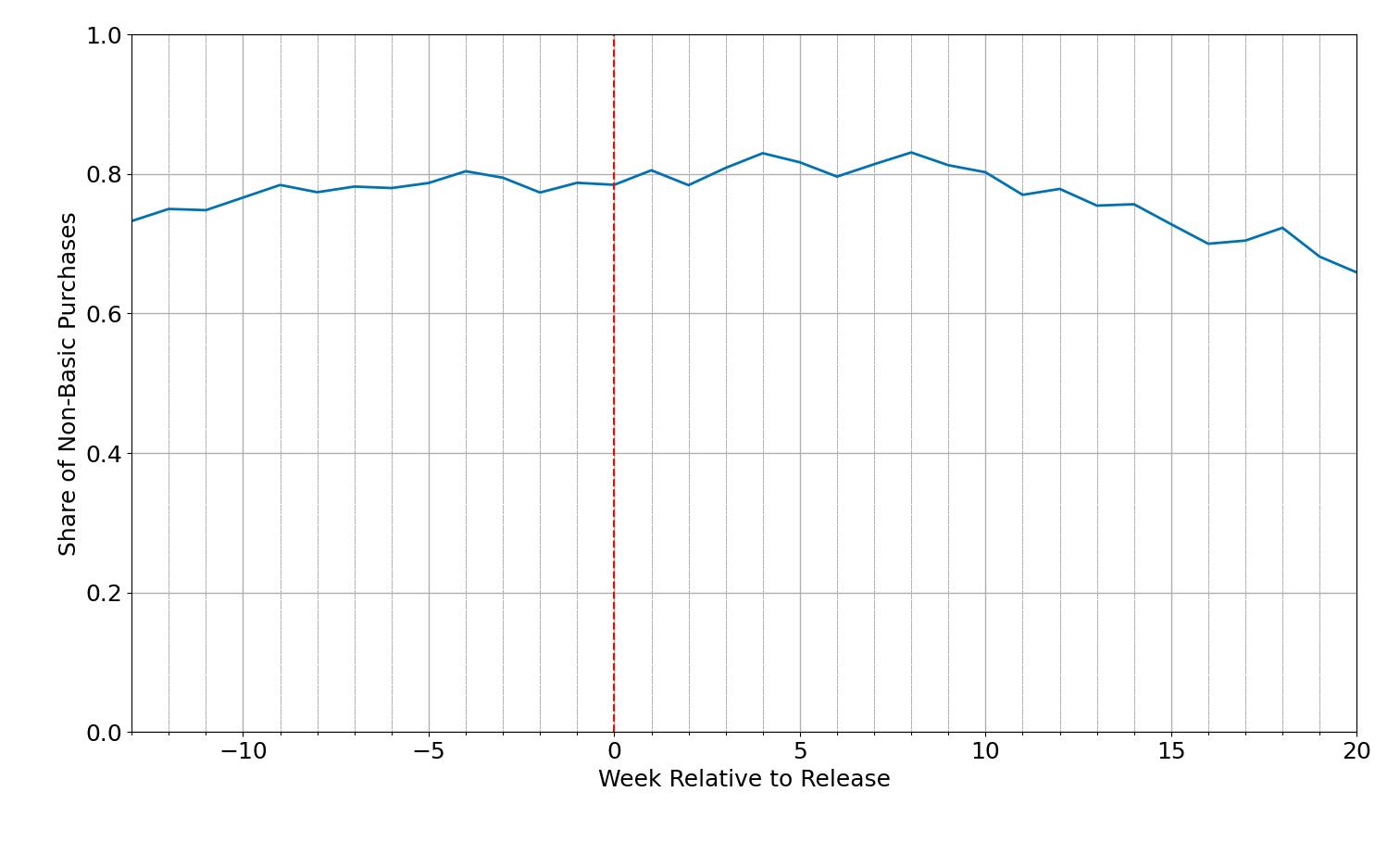}        \caption{Fraction of All Purchases}
        \label{fig:frac_notbasic}
    \end{subfigure}

    \medskip
    \begin{tablenotes}
     \item{\small \emph{Figure Notes}: The left panel reports the number of purchases and renewals of membership plans that provide access to the AI tool. The right panel reports the share of such plans relative to all membership plan purchases and renewals. Due to data confidentiality, the y-axis in panel (a) is normalized. The maximum value is shown as ‘n’, and other ticks represent fractions of n.}
    \end{tablenotes}
   
\end{figure}

Another potential concern is that the rollout of the AI cover-letter tool might have coincided with changes in job postings. Figure~\ref{fig:jobs-week} shows that the weekly number of new jobs in our sample remained stable around the launch, with a decline only later in the summer months—a pattern consistent with the drop in membership purchases.\footnote{This seasonal slowdown is a well-documented phenomenon in freelance markets. Numerous specialized media reports note this cyclical decline; see, for example, \citet{robinson2024summer}, \citet{goodwin2025survive}, and \citet{vozza2016summer}.}

\begin{figure}[htbp]
\centering
  \caption{Weekly Number of Jobs Posted}
  \includegraphics[width=0.7\linewidth]{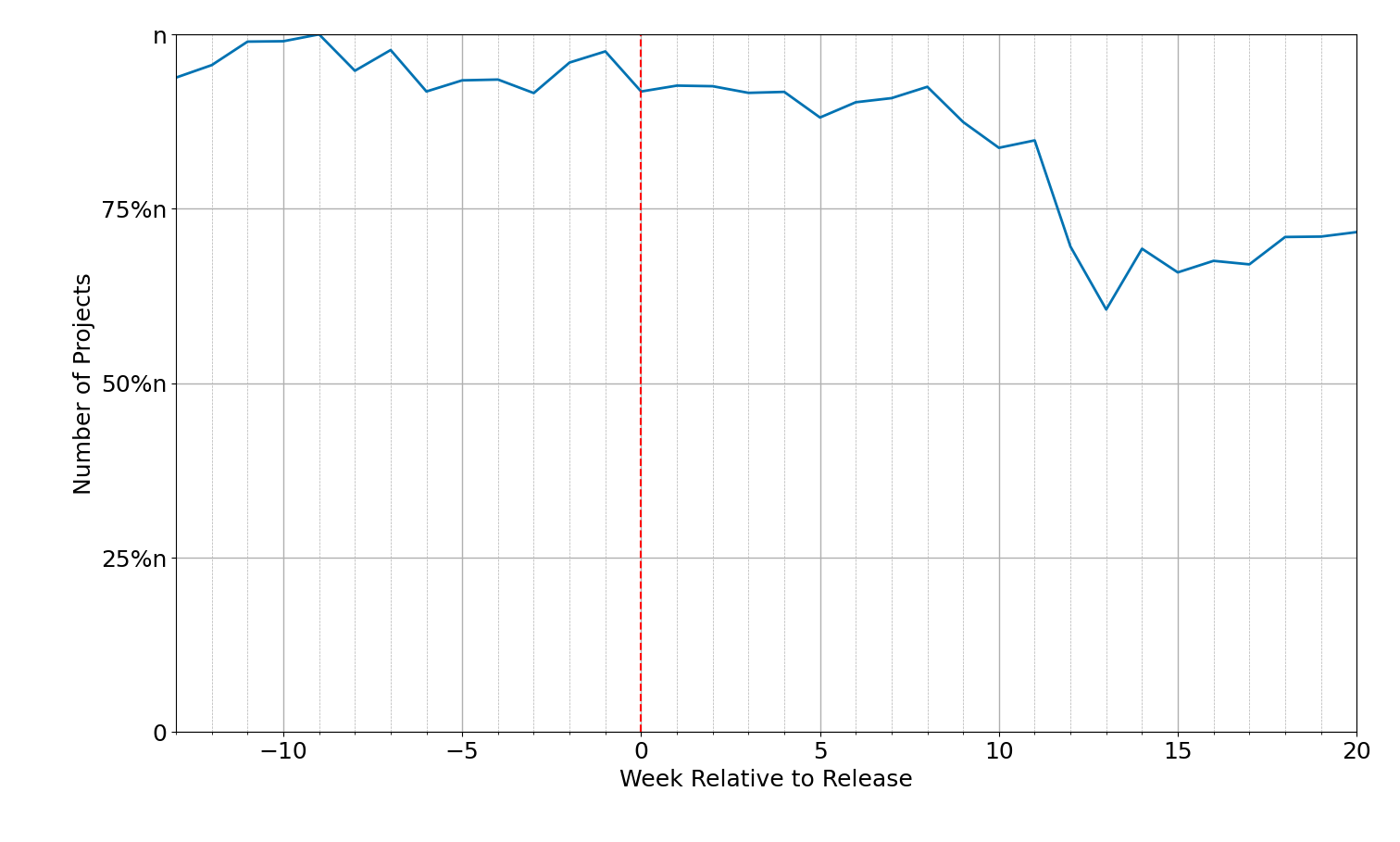}

  \label{fig:jobs-week}
   \medskip
    \begin{tablenotes}
     \item{\small \emph{Figure Notes}: A job's week is defined as the week when its first bid was submitted. Due to data confidentiality, the y-axis is normalized. The maximum value is shown as ‘n’, and other ticks represent fractions of n.}
    \end{tablenotes}
\end{figure}

\FloatBarrier

\section{\label{sec: model} The Signaling Model}
In this section, we present a simple model that highlights the main forces at play when AI becomes available to a subset of job applicants. The model generates predictions about how the tool affects both workers who gain access and those who do not, as well as how it alters the role of cover letters as a signal. We test these predictions in the sections that follow.

\subsection{Setup}

Let $i$ index workers and $j$ index firms. The cover-letter tailoring of worker $i$ when applying to firm $j$, denoted $h_{ij}$, depends on the worker’s latent productivity $q_i$ and access to the AI tool. Let $\rho_i \in \{0,1\}$ indicate AI access, with $\Pr(\rho_i=1)=p\in(0,1)$, and let $A \geq 0$ denote the effect of the tool on cover-letter tailoring. We assume
\begin{equation}\label{eq:cover}
h_{ij} = q_i + \rho_i A + \nu_{ij},
\end{equation}
where $\nu_{ij} \sim \mathcal{N}(0,\sigma^2)$ is an i.i.d.\ shock and $q_i \sim \mathcal{N}(\mu_0,\tau^2)$ represents worker productivity. We assume that $q_i$, $\nu_{ij}$ and $a_i$ are independent, i.e. AI access is random and unrelated to productivity\footnote{This formulation follows the cover-letter production function in \citet{cowgill2024does}, with the modification that AI access at rollout is incomplete: only a subset of workers receive access, and for those workers the tool’s effect is an additive shift of magnitude $A$.}, and that $(q_i,\quad\rho_i,\quad\nu_{ij})$ are independent across workers. This model assumes full compliance: workers with access always use the tool. But one could easily adapt the model to one of random partial compliance.

The introduction of the AI tool corresponds to a shift from $A=0$ to $A>0$.

\subsection{AI and the Expected Productivity}
Employers do not value cover-letter quality per se; they value what $h_{ij}$ reveals about $q_i$. We therefore characterize how observed quality maps into expected productivity. With the Gaussian prior for $q$, additive noise $\nu$, and unobserved access $\rho_i$ (Bernoulli with mean $p$), we have
\begin{equation}\label{eq:exp_prod}
    \mathbb{E}[\,q_i \mid h_{ij}\,]
    \;=\;
    \mu_0 \;+\;
    \frac{\tau^2}{\tau^2 +\sigma^2}
    \big(h_{ij} - \mu_0 - Ag(h_{ij})\big),
\end{equation}
where
\begin{equation}\label{eq:func}
    g(h_{ij}) = \mathbb{P}[\rho_i=1|h_{ij}]=\frac{1}{1+\frac{1-p}{p}\exp\left(\frac{-2A(h_{ij}-\mu_0)+A^2}{2(\tau^2+\sigma^2)}\right)}.
\end{equation}

Appendix Section \ref{sec:derivation} provides the derivation. The function $g(h_{ij})$ represents the possibility that the cover letter could be written with AI assistance. That possibility, multiplied by the technology's effect $A$, is the discounting that employers apply to cover letters when AI is available. Intuitively, higher-quality cover letters are more likely to be written with AI assistance ($g'(h)>0$). And if more workers have access to the tool—a higher $p$—the possibility that a given cover letter is assisted by AI is also higher.

Before the introduction of AI, when $A=0$, expected productivity rises linearly in cover letter quality, consistent with cover letters as a signaling tool. After AI becomes available to a subset of workers, it raises the cover letter quality for those workers. This has several implications:

\begin{itemize}
    \item[(i)] The availability of AI lowers the expected productivity at every cover letter quality level
    \begin{equation*}
        \mathbb{E}[q \mid h;\,A>0] - \mathbb{E}[q \mid h;\,A=0]  = -\frac{\tau^2}{\tau^2+\sigma^2}Ag(h)<0, \forall h;
    \end{equation*}
\begin{comment}
    \item[(ii)] the magnitude of the decline is increasing in $h$;
    \begin{equation*}
        \frac{\partial \, \left[ \mathbb{E}[q_i \mid h;\,A=0] - \mathbb{E}[q_i \mid h;\,A>0]\, \right]}{\partial h} = \frac{\tau^2}{\tau^2+\sigma^2}Ag'(h) = \frac{\tau^2}{(\tau^2+\sigma^2)^2}A^2g(h)(1-g(h))>0, \forall h
    \end{equation*}
\end{comment}
    \item[(ii)] The availability of AI lowers the correlation between cover letter quality and expected productivity 
\begin{equation*}
\frac{\partial}{\partial h}\,\mathbb{E}[q_i \mid h;\,A>0]
=\frac{\tau^2}{\tau^2+\sigma^2}\Big[1-\frac{A^2}{\tau^2+\sigma^2}\,g(h)\big(1-g(h)\big)\Big]
\;<\;\frac{\tau^2}{\tau^2+\sigma^2}
= \frac{\partial}{\partial h}\,\mathbb{E}[q_i \mid h;\,A{=}0], \forall h.
\end{equation*}
\end{itemize}

%As a benchmark, we consider  that evaluates a single worker against an outside option; in this case, the choice depends solely on the worker’s expected productivity. We then turn to the more realistic setting with competition among workers. Because the comparative statics in that environment are parameter-dependent, we present numerical simulations of the model that highlight patterns consistent with our empirical results.

\subsection{Hiring Decision} \label{subsec:hiring}
In this section, we move beyond the effect of AI on expected worker productivity to examine its impact on hiring outcomes. To do so, we introduce a model of employer choice. We begin with a benchmark case in which employers’ hiring decisions are independent across workers, and then extend to a more realistic setting that incorporates direct competition among workers.

\paragraph{Worker vs. the outside option}

Assume that a randomly drawn worker $i$ applies to job $j$. Employer $j$ observes the cover letter quality and decides whether to hire the worker or to select the outside option. Let the employer’s utility from hiring worker $i$ be
\begin{equation}
    u_{ij} \;=\; \mathbb{E}[\,q_i \mid h_{ij}\,] \;+\; \varepsilon_{ij},
\end{equation}
where $\varepsilon_{ij}$ are i.i.d.\ Type-I extreme value shocks. Employer's utility from the outside option is  $\varepsilon_{0j}$, with mean 0. The worker's probability of being hired is a function of her cover letter quality
\begin{equation}
    \mathbb{P}\!\left[\text{hire }i \,\middle|\, h_{ij}\right]
    \;=\;
    \frac{\exp(\mathbb{E}[\,q_i \mid h_{ij}\,])}
         {\;1 + \exp(\mathbb{E}[\,q_i \mid h_{ij}\,])\;}.
\end{equation}
Note that the probability of being hired is increasing in expected productivity. Hence, for a given cover letter quality $h$, the availability of AI ($A>0$) lowers the hiring probability. There are also implications on the effect of AI on workers with and without access:

\begin{itemize}
    \item[(i)] At all productivity levels, access to AI assistance increases the worker's hiring probability. For sufficiently high-quality workers, the gain from AI decreases in the worker's quality.
    \item[(ii)] At all productivity levels, workers without access to AI assistance experience a decrease in their hiring probability after AI becomes available. For sufficiently high-quality workers, the loss grows in the worker's quality.
\end{itemize}

Appendix Section~\ref{sec:derivation} provides the proofs. The results show that workers with access benefit from AI, with moderately productive workers gaining more than the most productive. By contrast, workers without access are negatively affected, as employers discount the quality of their cover letters.

Yet employers’ hiring decisions are unlikely to be independent across workers. On the platform, each worker competes not only with the outside option but also with other workers. Once competition is introduced, some of the earlier results become parameter-dependent. In the next subsection, we extend the model to incorporate competition and present numerical simulations that generate predictions consistent with several patterns documented in the empirical analysis.

\paragraph{Competition between workers}
For each task $j$, $N$ applicants are randomly drawn. Employer $j$ observes $\{h_{ij}\}_{i=1}^N$ and selects one of the workers or the outside option. Conditional on the vector $h_j=(h_{1j},\ldots,h_{Nj})$, the multinomial logit choice probability is
\begin{equation}
    \mathbb{P}\!\left[\text{hire }i \,\middle|\, h_j\right]
    \;=\;
    \frac{\exp(\mathbb{E}[\,q_i \mid h_{ij}\,])}
         {\;1 + \sum_{k=1}^N \exp(\mathbb{E}[\,q_k \mid h_{kj}\,])\;}.
\end{equation}

The ex-ante probability that worker $i$ is hired, before the identities of her competitors are realized, integrates over the competitors' cover letters:
\begin{equation}
    \mathbb{P}\!\left[\text{hire }i \,\middle|\, h_{ij}\right]
    \;=\;
    \int
    \frac{\exp(\mathbb{E}[\,q_i \mid h_{ij}\,])}
         {\,1 + \sum_{k\ne i} \exp(\mathbb{E}[\,q_k \mid h_{kj}\,])\,}
    \;\prod_{k\ne i} f(h_{kj}) \, dh_{kj},
\end{equation}
where $f(\cdot)$ is the unconditional density of cover-letter quality for rivals (induced by the distributions of $q$, $\rho$, and $\nu$). 

We use numerical simulations to examine how introducing the AI tool changes hiring outcomes. Figure~\ref{fig:cover} plots the probability of being hired as a function of cover letter quality before and after the tool’s introduction. After AI is introduced, the hiring probability becomes less sensitive to cover letter quality—the curve flattens, reflecting weaker signaling power.

\begin{figure}[htbp]
  \centering
    \caption{Probability of Being Hired Pre- and Post- AI given Cover Letter Tailoring}
  \includegraphics[width=0.7\linewidth]{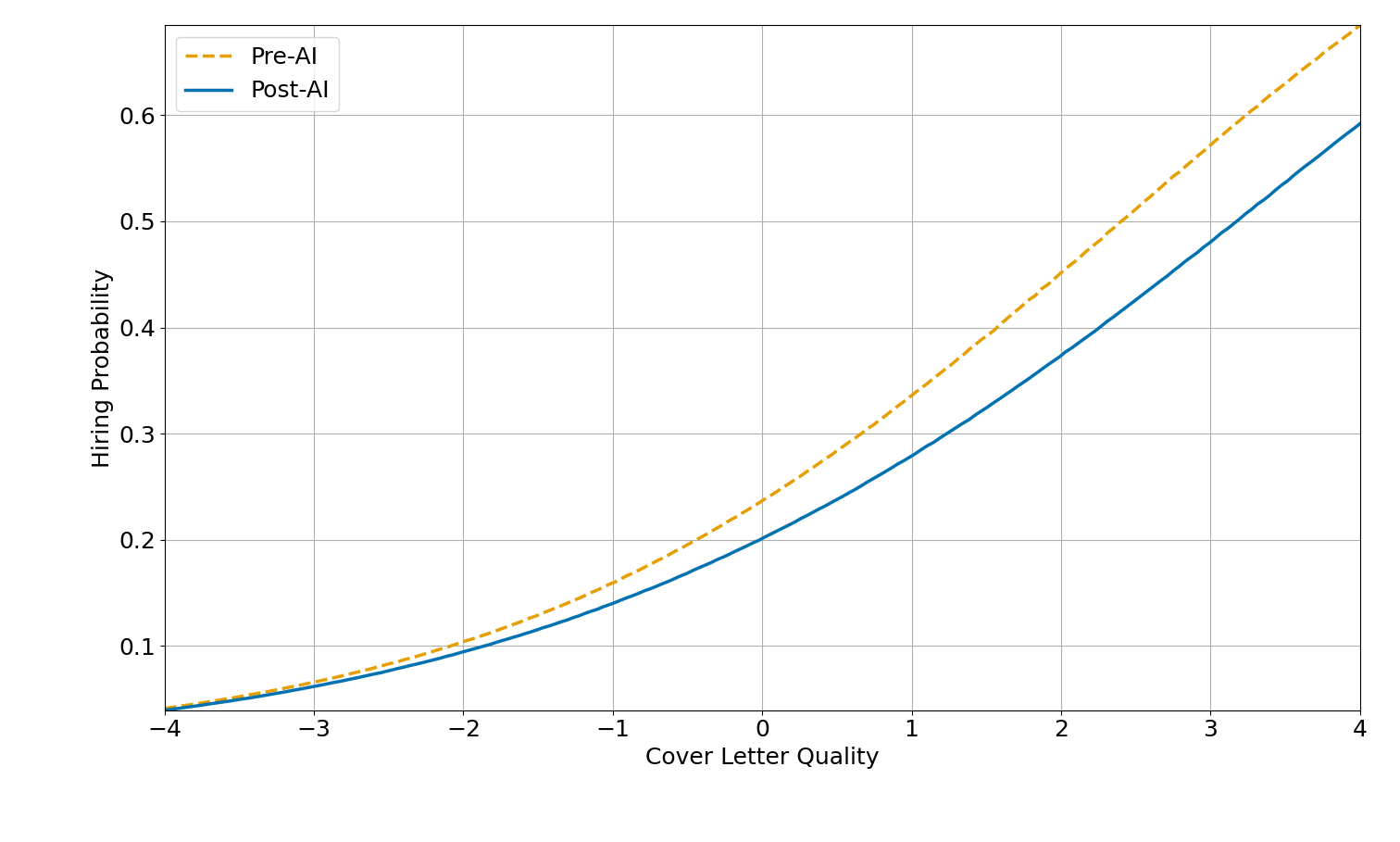}

  \label{fig:cover}
    \medskip
    \begin{tablenotes}
     \item{\small \emph{Figure Notes}: This numerical simulation was made using $\mu_0=0,\quad \sigma^2=1,\quad \tau^2=1,\quad p=0.5,\quad N=3$ and $A=1$ (post-AI).}
    \end{tablenotes}
\end{figure}

Figure~\ref{fig:treat_control} presents numerical results for hiring probabilities of the treatment group (workers with access) and the control group (workers without access). For this parameterization (reported below the figure), the patterns from the independent-choice case carry over: among treated workers, AI raises hiring probabilities, though the gain diminishes with productivity $q$. Control workers, by contrast, face a decline in hiring probability, driven both by employers’ discounting of their cover letters and by heightened competition from higher-quality letters submitted by treated workers. The loss is especially pronounced for high-productivity workers.

\begin{figure}[htbp]
    \centering
      \caption{Probability of Being Hired Pre- and Post-AI}
    \includegraphics[width=0.7\linewidth]{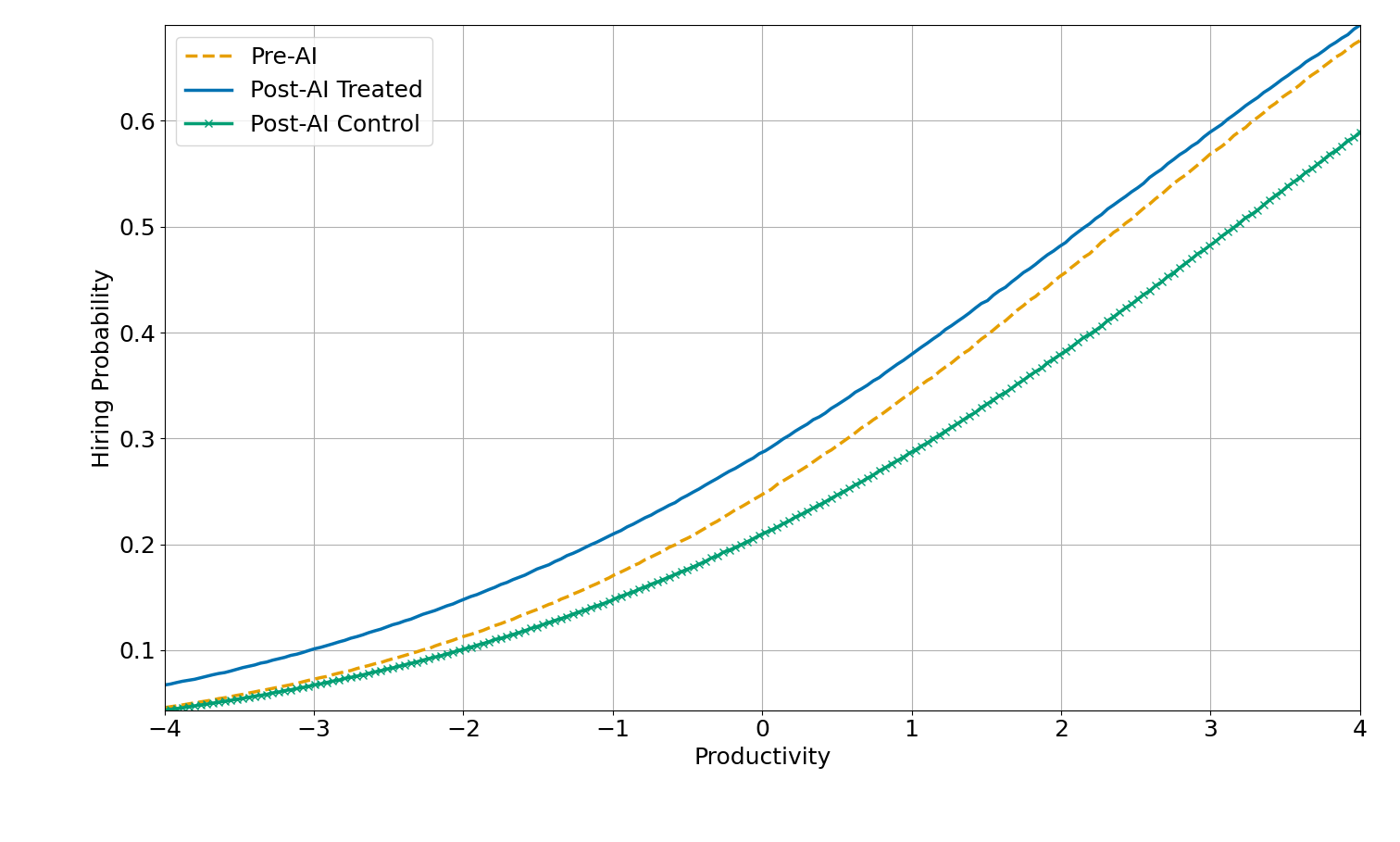}

  \label{fig:treat_control}
    \medskip
    \begin{tablenotes}
     \item{\small \emph{Figure Notes}: This numerical simulation was made using $\mu_0=0,\quad \sigma^2=1,\quad \tau^2=1,\quad p=0.5,\quad N=3$ and $A=1$ (post-AI).}
    \end{tablenotes}
\end{figure}

Figure~\ref{fig:ex_ante} plots the probability of being hired as a function of latent productivity $q$, before and after the tool’s introduction. We evaluate this probability ex ante—that is, before it is known whether a worker has access to the tool—so the measure reflects average hiring chances across access states. For sufficiently low $q$, the hiring probability increases; for high $q$, it decreases. This pattern reflects two opposing forces: (i) the possibility of AI access, which raises expected cover-letter quality and thereby the probability of being hired; and (ii) the employers' discounting of cover letters, which reduces the hiring probability. For low-$q$ workers, the first effect dominates, whereas for high-$q$ workers, the second effect dominates.

\begin{figure}[htbp]
  \centering
    \caption{Probability of Being Hired Pre- and Post- AI given Productivity}
  \includegraphics[width=0.7\linewidth]{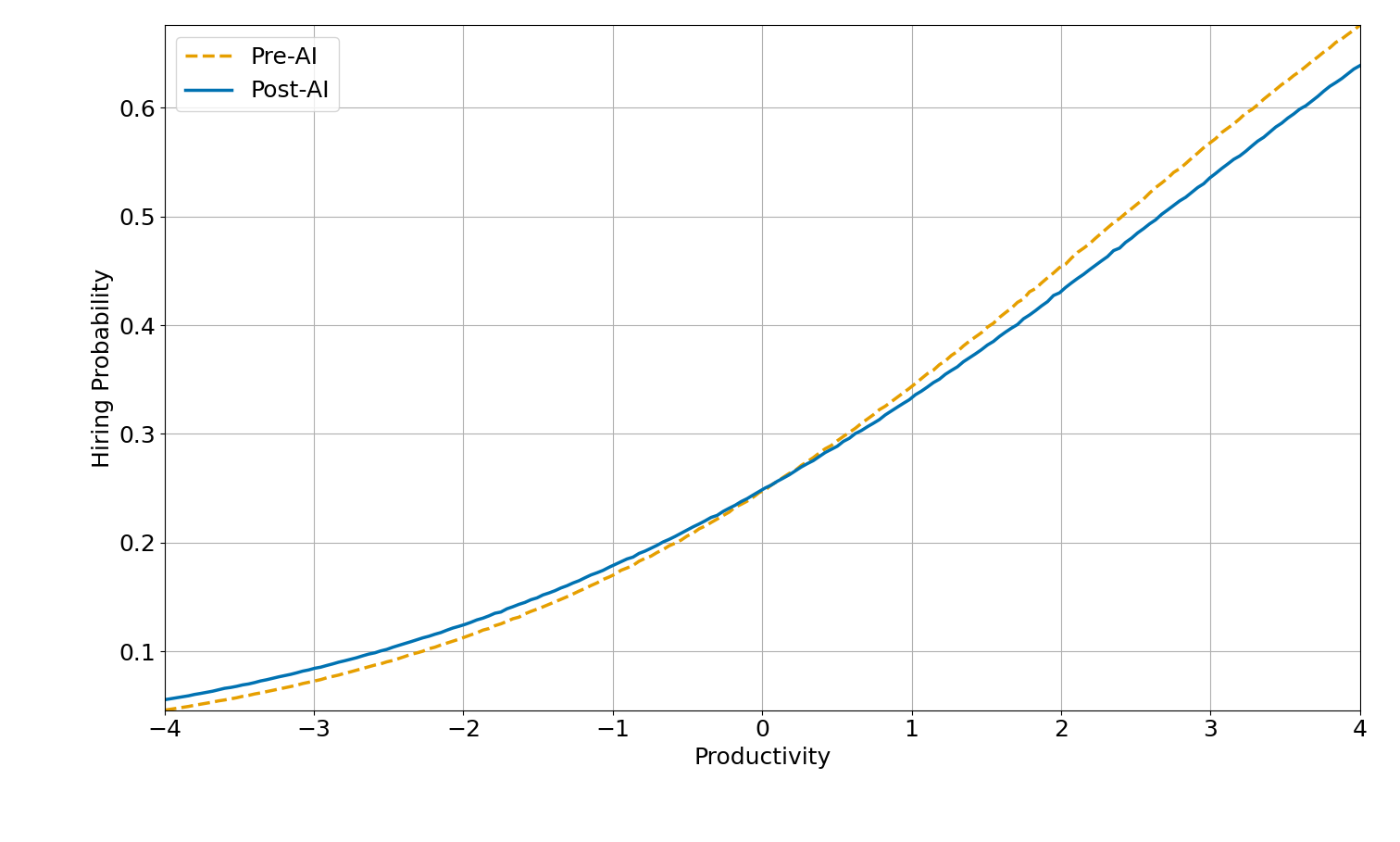}

  \label{fig:ex_ante}
    \medskip
       \begin{tablenotes}
     \item{\small \emph{Figure Notes}: This numerical simulation was made using $\mu_0=0,\quad \sigma^2=1,\quad \tau^2=1,\quad p=0.5,\quad N=3$ and $A=1$ (post-AI).}
    \end{tablenotes}
\end{figure}

In conclusion, our simple signaling model generates predictions at both the individual and market levels. At the individual worker level, the model predicts that workers with access experience an increase in hiring probability after AI is introduced whereas workers without access experience a decrease. At the market level, the model predicts a weakening of the correlation between cover-letter quality and hiring. We next test these predictions in the data, exploiting the introduction of the platform-native AI cover-letter writing tool.

\FloatBarrier

\section{\label{sec:pe} The Individual-Level Impact of Generative AI}

In this section, we study the partial equilibrium effects of the generative AI writing tool. Using a two-way fixed effect specification, we compare the outcomes between workers who did and did not have access to the tool before and after the rollout of the tool. The outcomes we consider are the cover letter quality and the probability of receiving a callback.

\subsection{Econometric Assumptions \label{sec: econometrics}}
In this subsection we address two identification challenges in our estimation: the introduction of GPT-4 during our sample period and potential SUTVA--- Stable Unit Treatment Value Assumption--- violations arising from the competitive nature of our setting. First, we explain how our research design accounts for the confounding factor introduced by GPT-4 and the assumptions required for identification. We then discuss how to interpret the estimates given that the tool’s rollout generates market-level changes and competitive spillovers. 
%For expositional clarity, we initially omit covariates to focus on the identification challenges. In subsequent subsections, we incorporate controls for observables—specifically, the worker’s experience at the time of bidding and the normalized wage bid. 

\paragraph{GPT-4 Introduction}
As discussed in the introduction, we must account for the rollout of GPT-4. In a standard difference-in-differences design, identification comes from comparing two groups (treatment and control) before and after treatment implementation. However, pre-treatment exposure to an analogous external technology (GPT-4) can confound this comparison in our setting. To address this, we impose the following identification assumption: although the effect of GPT-4 may differ between treated and control workers, its impact stabilizes within one month of its introduction. In other words, we assume post-GPT parallel trends in the absence of the platform-native tool. We also assume no anticipation of the platform tool introduction.

%Formally, let $y_{i,t}$ denote the outcome of worker $i$ in period $t$ (month). Let $t^*$ be the period when the platform tool becomes available ($t^*-1$ is the period when GPT-4 is introduced). Let $\text{PostGPT}_t$ be an indicator for the GPT-4 rollout ($\text{PostGPT}_t = \mathbf{1}\{t \ge t^*-1\}$), $\text{Access}_i$ the treatment-group indicator, and $\text{PostAI}_t$ an indicator for the platform tool rollout ($\text{PostAI}_t = \mathbf{1}\{t \ge t^*\}$). Note that $\text{PostAI}_t=1 \Rightarrow \text{PostGPT}_t=1$, so the effects we estimate are conditional on the presence of GPT-4. Our  two-way fixed effects design is:

%\begin{equation}\label{eq:design}
%    y_{i,t}=\alpha_t+\lambda_i+\beta\left(\text{Access}_i \times \text{PostGPT}_t\right)+\gamma \left(\text{Access}_i \times \text{PostAI}_t\right) + \epsilon_{i,t}.
%\end{equation}

Define the potential outcome $Y_{i,t}^g(d)$ as the outcome for worker $i$ in period $t$ under treatment status $d \in \{0,1\}$. The superscript $g$ indicates the GPT-4 status, where $g=1$ if the period occurs after the introduction of the tool and $g=0$ otherwise. Let $t^*$ denote the period in which the platform tool was rolled out (measured in months). GPT-4 was introduced in period $t^*-1$. Let $Y^g_{i,t}$ be the outcome we observe in the data. We impose the following assumptions:

\begin{comment}
We make the following assumptions:
\begin{assumption}[\textbf{Pre-GPT Parallel Trends}]\label{ass:pt}
For all $t$,
\begin{equation*}
\mathbb{E}\!\left[\,Y_{i,t}(0,0)-Y_{i,t-1}(0,0)\,\middle|\,D_i=1\,\right]
=
\mathbb{E}\!\left[\,Y_{i,t}(0,0)-Y_{i,t-1}(0,0)\,\middle|\,D_i=0\,\right].
\end{equation*}
\end{assumption}    
\end{comment}
\begin{assumption}[\textbf{Post-GPT Parallel Trends}]\label{ass:stabilized}
For all $s > t^*-1$,
\begin{equation*}
\mathbb{E}\!\left[\,Y_{i,s}^1(0)-Y_{i,s-1}^1(0)\,\middle|\,\text{Access}_i=1\,\right]
-
\mathbb{E}\!\left[\,Y_{i,s}^1(0)-Y_{i,s-1}^1(0)\,\middle|\,\text{Access}_i=0\,\right]
= 0,
\end{equation*}
%for some constant $\kappa$ (i.e., any treated–control difference induced by the GPT-4 rollout is time-invariant after its introduction).
\end{assumption}

\begin{assumption}[\textbf{No Anticipation of the Tool}]\label{ass:no_ant}
For all $t<t^*$ and $g\in\{0,1\}$,
\begin{equation*}
Y_{i,t}^g = Y_{i,t}^g(0).
\end{equation*}
\end{assumption}

%Assumption \ref{ass:stabilized} is equivalent to post-GPT parallel trends\footnote{Appendix~\ref{sec:derivation_metric} derives this equivalence.}.

The first assumption is equivalent to allowing the effect of GPT-4 on the treated and control groups to differ, but only through a level shift at the time of introduction. The second assumption states workers do not change their behaviour or outcomes in anticipation of the future implementation of the AI tool. A standard two-way fixed effects regression identifies  
\begin{equation}\label{eq:id}
    \mathbb{E}[Y_{i,t}^1 - Y_{i,t^*-1}^1 \mid \text{Access}_i=1] 
    - \mathbb{E}[Y_{i,t}^1 - Y_{i,t^*-1}^1 \mid \text{Access}_i=0],
\end{equation}
averaged over post-treatment periods. Under the canonical difference-in-differences framework (Assumptions \ref{ass:stabilized}, \ref{ass:no_ant}, and SUTVA), this specification would identify
\begin{equation}\label{eq:interest}
    \mathbb{E}[Y_{i,t}^1(1) - Y_{i,t}^1(0) \mid \text{Access}_i=1],
\end{equation}
averaged over time.\footnote{See Appendix~\ref{sec:derivation_metric} for a derivation.} However, in our competitive setting, treatment of one unit induces spillovers onto units that do not receive the treatment. The next subsection discusses this violation and its implications for interpreting our estimates.

\paragraph{SUTVA Violation}
Our setting violates the Stable Unit Treatment Value Assumption (SUTVA): when one
worker is treated, rivals’ outcomes change because hiring is competitive. In addition, the introduction of the tool shifts the market—specifically, the salience of cover letter tailoring changes after the tool becomes available. To clarify what the estimated coefficients are capturing, consider the following econometric model. 

Let the outcome of interest be $Y_{it}^g$. Define the potential outcome $Y_{it}^g(d,m,c)$ as the outcome for worker $i$ in period $t$ under GPT-4 status $g \in \{0,1\}$, treatment status $d\in\{0,1\}$ market status $m$, and competition status $c$. The market status $m$ captures general features such as the salience of cover-letter tailoring in firms’ hiring decisions or overall propensity of firms to hire. The competition status $c$ reflects the nature of competition regarding access to the tool—that is, whether the group the worker does not belong to (treatment or control) has access to the tool. Equation~\ref{eq:id} shows what a standard two-way fixed effects regression identifies. Under this model, we can express that object through the following decomposition:
%It follows that a two-way fixed effect regression design allows us to identify the following moment averaged over time:
%\begin{equation}
    %\mathbb{E}[Y_{i,t}(1,1,m_1,0)-Y_{i,t^*-1}(1,0,m_0,0)\mid \text{Access}_i=1] - \mathbb{E}[Y_{i,t}(1,1,m_1,1)-Y_{i,t^*-1}(1,0,m_0,0)\mid \text{Access}_i=0]
%\end{equation}
%where $m_1$ is the market status after the introduction of the tool and $m_0$ is the market status before the tool but under the presence of GPT-4. A decomposition of this object provides useful insight into the interpretation of our estimates:
\begin{gather*}
    \underbrace{\mathbb{E}[Y_{i,t}^1(1,m_1,0)-Y_{i,t^*-1}^1(0,m_0,0)\mid \text{Access}_i=1] - \mathbb{E}[Y_{i,t}^1(0,m_1,1)-Y_{i,t^*-1}^1(0,m_0,0)\mid \text{Access}_i=0]}_{\text{Total Effect}} = \\
    \underbrace{\mathbb{E}[Y_{i,t}^1(1,m_1,0)-Y_{i,t^*-1}^1(0,m_1,0)\mid \text{Access}_i=1]+\mathbb{E}[Y_{i,t}^1(0,m_1,1)-Y_{i,t^*-1}^1(0,m_1,1)\mid \text{Access}_i=0]}_{\text{Direct Effect}} \quad + \\
    \underbrace{\mathbb{E}[Y_{i,t^*-1}^1(0,m_1,1)-Y_{i,t^*-1}^1(0,m_1,0)\mid \text{Access}_i=0]}_{\text{Competition Spillovers}} \quad + \\
     \underbrace{\mathbb{E}[Y_{i,t^*-1}^1(0,m_1,0)-Y_{i,t^*-1}^1(0,m_0,0)\mid \text{Access}_i=1] + \mathbb{E}[Y_{i,t^*-1}^1(0,m_1,0)-Y_{i,t^*-1}^1(0,m_0,0)\mid \text{Access}_i=0]}_{\text{Market Shift Effect}}
\end{gather*}
where $m_1$ is the market status after the introduction of the tool and $m_0$ is the market status before the tool but under the presence of GPT-4. The first term is our object of interest: the effect of the platform on workers’ outcomes while holding market conditions constant. The second term captures the indirect effect of a treated unit on the outcome of an untreated unit. This term highlights the zero-sum nature of our setting. Since workers compete for tasks, an increase in the probability of being hired for one group necessarily implies a decrease for the other (abstracting from market-level shifts). Because spillovers in this context are negative, the DiD coefficient provides a lower bound on the sum of the other two effects. Finally, the third term reflects aggregate market changes that coincided with the introduction of the tool. Note that this third term affects the results only insofar as aggregate shifts differentially impact the control and treatment groups.

\subsection{The Impact of AI on Cover Letter Tailoring}

\begin{figure}[h!]
    \centering
    \caption{Time Series of Average Cover Letter Tailoring by Membership Status}
    \begin{minipage}{0.8\textwidth}
        \includegraphics[width=\linewidth]{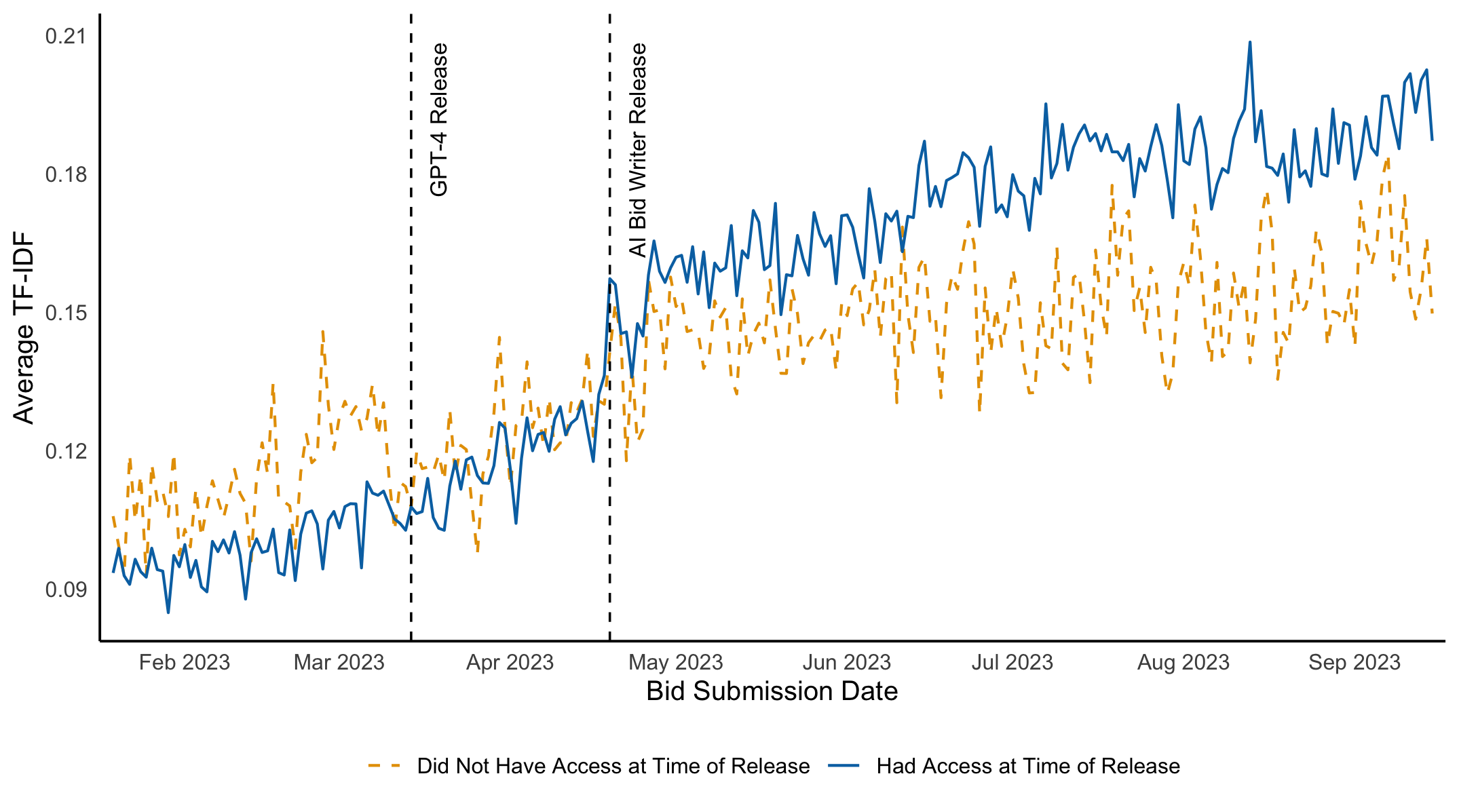}
        {\footnotesize Notes: Text similarity is computed as the cosine similarity between TF-IDF vectors of a job proposal and a job description. \par}
    \end{minipage}
    \label{fig:avg_tfidf}
\end{figure}

We begin by examining the impact of access to the AI writing tool on textual similarity scores in cover letters. Figure \ref{fig:avg_tfidf} presents a visual comparison of average text similarity scores between bidders with and without access to the AI tool. Before the rollout of the AI tool and the rollout of GPT-4, the tailoring of cover letters written by workers with access followed a similar trend to those without access. However, after the rollout, the two lines diverged, with the former exhibiting a noticeable and sustained increase in similarity.

To formally estimate the effect, we employ a two-way fixed effects intention-to-treat (ITT) estimator, specified as follows:
\[
y_{ijt} = \delta_i + \gamma_t + \beta \text{PostAI}_t\times\text{Access}_i+ \lambda \text{PostGPT}_t\times\text{Access}_i +  x_{ijt}+\epsilon_{ijt}
\]
where the outcome $y_{ijt}$ is text similarity score of the cover letter written by worker $i$ at auction $j$ in week $t$, $\delta_i$ captures worker fixed effects, $\gamma_t$ represents week fixed effects, $\text{PostGPT}_t$ is an indicator for whether the bid was submitted after the release of GPT-4 (March 14), $\text{PostAI}_t$ is an indicator for whether the bid was submitted post platform tool rollout (April 19), $\text{Access}_i$ is an indicator for whether the bidder had access to the tool at rollout, and $x_{ijt}$ is a vector of additional controls including the experience level of the worker at the time of bidding (based on their platform-calculated review score) and the normalized wage bid (wage bid divided by the minimum job budget). 

While the above specification recovers an intention-to-treat (ITT) effect of AI access on labor market outcomes, we additionally obtain the local average treatment effect (LATE). The need to estimate LATE arises from imperfect compliance: not all workers with access to the AI writing tool used it, and those who did use it did not necessarily do so for every auction. This non-compliance implies that the ITT estimate is a lower bound of the effect of AI usage, as it is an average over cases where the eligible workers used the tool and cases where the eligible workers did not use the tool.

To estimate the LATE, we instrument for AI usage with treatment using the below specification:
\[
y_{ijt} = \delta_i + \gamma_t + \beta \text{AI Bid}_{ijt} + x_{ijt} + \lambda \text{PostGPT}_t\times\text{Access}_i + \epsilon_{ijt},
\]
where the dependent variable, \(y_{ijt}\), again represents textual similarity for bidder \(i\) on auction \(j\) in week \(t\). The variable \(\text{AI Bid}_{ijt}\) is an indicator for whether the AI tool was used to write the bid. Observing usage of the AI tool at the application level is a key advantage of our paper. We instrument for \(\text{AI Bid}_{ijt}\) using $\text{PostAI}_t \times \text{Access}_i$, the interaction of AI tool access and the post-rollout period. 
\begin{table}[hbtp]
\centering
\caption{Effect of AI on Cover Letter Tailoring}
\label{tab:tfidf_regression}
\begin{tabular}{lcccc}
\toprule
 & \multicolumn{4}{c}{Cover Letter Tailoring} \\
 & (1) ITT & (2) LATE & (3) ITT & (4) LATE \\ 
\hline
AI & 0.0270*** & 0.2112*** & 0.0220*** & 0.1832*** \\
   & (0.0032)  & (0.0183)  & (0.0032) & (0.0194)  \\
\hline
Bidder FE & \checkmark & \checkmark & \checkmark & \checkmark \\
Month FE   & \checkmark & \checkmark & \checkmark & \checkmark \\
$\text{PostGPT}_t \times \text{Access}_{i}$ &  &  & \checkmark & \checkmark \\
Outcome Mean & 0.1474 & 0.1474 & 0.1474 & 0.1474 \\ 
Outcome Std. Dev.  & 0.1634 & 0.1634 & 0.1634 & 0.1634 \\ 
\hline
\textit{N} & 2,511,592  & 2,511,592 & 2,511,592 & 2,511,592 \\
\bottomrule
\end{tabular}
\begin{tablenotes}
    \small \item \textit{Table notes:} Cover letter tailoring is measured as textual similarity—the cosine similarity between TF-IDF vectors of a cover letter and a job description. Column (1) and (3) report the ITT (the coefficient of PostAI $\times$ Access in the two-way fixed effect specification) with and without the PostGPT $\times$ Access term, respectively. Column (2) and (4) report the LATE with and without the PostGPT $\times$ Access term, where we instrument for the usage of AI using Post $\times$ Access. Wage bids and experience levels are controlled for. *** p $<$ 0.01, ** p $<$ 0.05, * p $<$ 0.1.
\end{tablenotes}
\end{table}

Table \ref{tab:tfidf_regression} contains the results from the above specifications, where columns (1) and (2) exclude the $\text{PostGPT}_t\times\text{Access}_i$ term. We find that across specifications, access to the AI tool significantly increased textual similarity scores. Examining column (3), using the ITT specification, the coefficient on the interaction term is 0.0220, which is 16.42\% of the pre-rollout textual similarity standard deviation of 0.1340. Relative to the pre-rollout baseline average textual similarity score of 0.1082 among those with access, this corresponds to an overall increase of 20.33\%. Because compliance is incomplete, our estimate of accessing the tool provides only a lower bound on the effect of actually using it.

% Additionally, the coefficient on rank percentile is negative and highly significant, implying that better-ranked workers (i.e., those with lower percentiles) produce cover letters with higher textual similarity scores. This result is consistent with the notion that highly ranked bidders are better positioned to craft tailored proposals, potentially leveraging their experience and reputation to signal higher quality. 

Employing the LATE specification in column (4), the coefficient on using AI bids is 0.1832, which is 136.72\% the pre-rollout standard deviation in similarity scores of 0.1340. Relative to the baseline mean similarity score among the treated, these estimates suggest that using the AI Bid Writer is associated with increases of 169.32\% in textual similarity. 

To evaluate dynamic effects, we additionally perform the following ITT specification:
\[
y_{ijt} = \delta_i + \gamma_t + \sum_{k \neq -1} \beta_k \, \mathbf{1}\{t = k\} \times \text{Access}_i \;\;+ \text{PostGPT}_t \times \text{Access}_i + x_{ijt} + \; \epsilon_{ijt}
\]
Figure \ref{fig:tfidf_did} reports the results, suggesting a persistent and significant effect associated with having access to the tool. 

\begin{figure}[h!]
    \centering
    \caption{Dynamic Effects of AI on Cover Letter Tailoring (ITT)}
    \begin{minipage}{0.8\textwidth}
        \includegraphics[width=\linewidth]{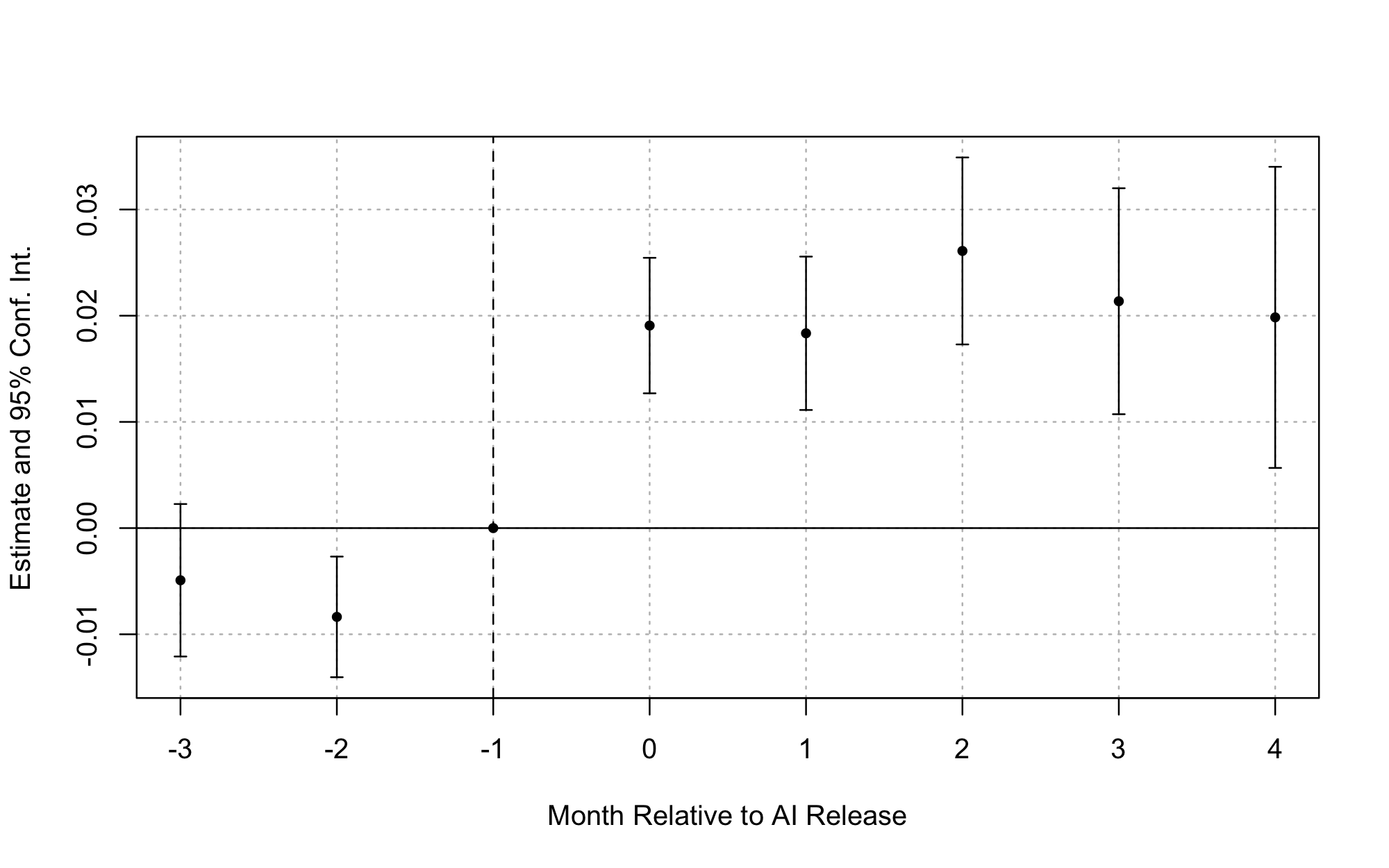}
        {\footnotesize Notes: Cover letter tailoring is measured as text similarity -- the cosine similarity between TF-IDF vectors of a job proposal and a job description. The above plots the coefficients and associated confidence intervals of the interactions between treatment and month indicators. Month 0 begins on April 19, 2023, the release of the tool, whereas Month -1 serves as the reference week. Standard errors are clustered at the bidder level.  \par}
    \end{minipage}
    \label{fig:tfidf_did}
\end{figure}

Our analysis so far has focused on the average effect of AI on cover letter tailoring; however, AI could shift the distribution of the cover letter tailoring in meaningful ways. In Figure \ref{fig:distribution_tfidf}, we plot the distribution of cover letter tailoring for workers who had access to the tool before and after the tool's rollout. We note that the post-rollout sample includes both bids that used the tool and bids that did not, despite having access. Notably, we observe a significant decline in the percentage of cover letters exhibiting extremely low textual similarity scores. This makes intuitive sense as now the cost of tailoring is drastically reduced.

\begin{figure}[h!]
    \centering
    \caption{Distribution of Cover Letter Tailoring Before and After AI Roll-out}
    \begin{minipage}{0.8\textwidth}
        \includegraphics[width=\linewidth]{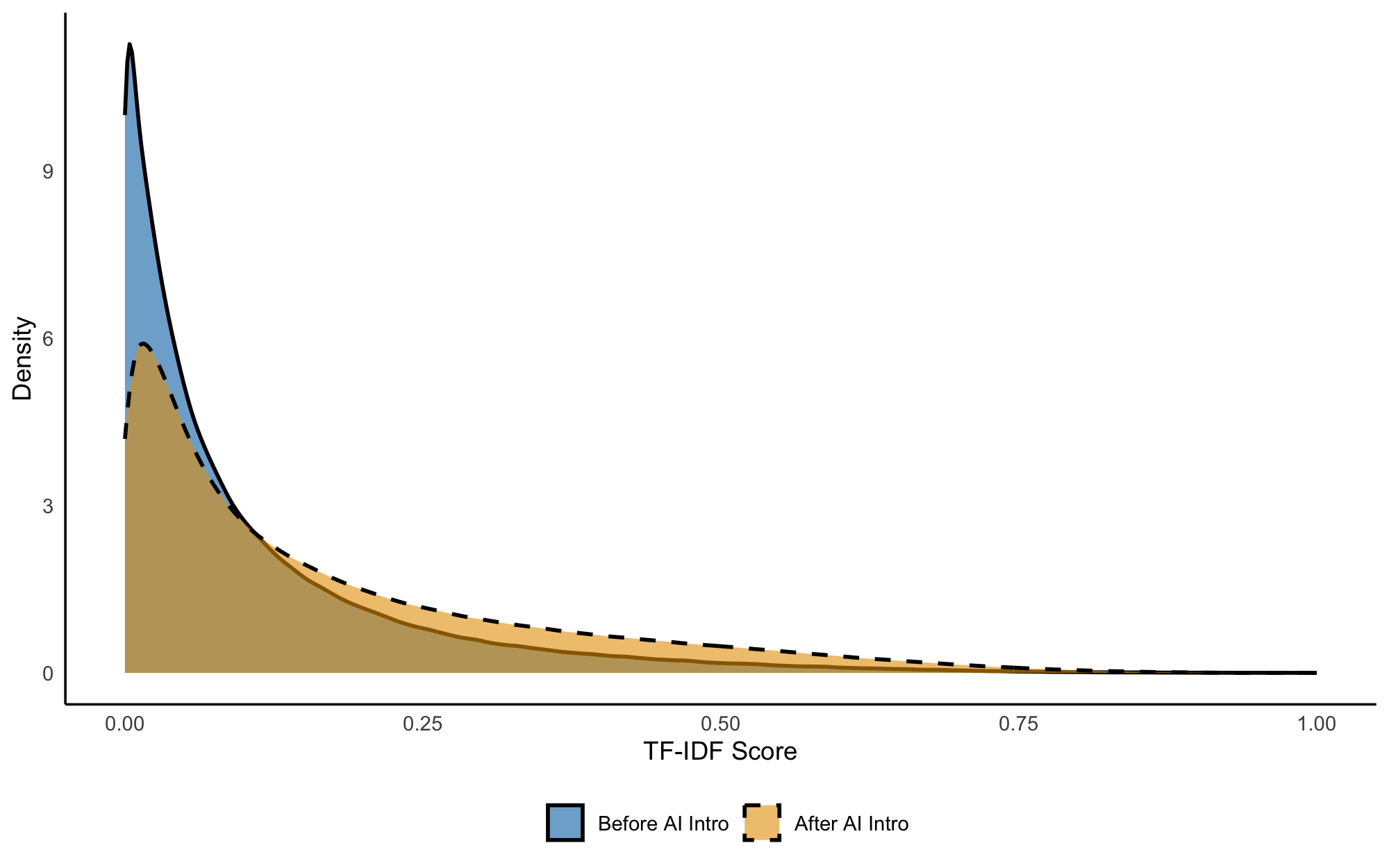}
        {\footnotesize \textit{Notes}: Text similarity is computed as the cosine similarity between TF-IDF vectors of a job proposal and a job description. The sample is limited to bids by workers who had access to the tool, but is not limited to bids for which the tool was used. \par}
    \end{minipage}
    \label{fig:distribution_tfidf}
\end{figure}

\subsection{Does AI Complement or Substitute Pre-Existing Writing Ability?}
The previous section found that the AI Bid Writer had a significant and positive impact on cover letter tailoring. However, the impact of generative AI assistance may vary across workers, for instance, depending on their prior ability to craft relevant and tailored cover letters. On the one hand, workers who were already skilled at customizing cover letters may be better positioned to leverage the tool’s capabilities. On the other hand, workers who previously wrote more generic cover letters may benefit more, since they have greater scope for improvement. Our unique data setting allows us to provide direct evidence on the nature of this heterogeneity.

\begin{table}[hbtp]
\centering
\caption{Heterogeneous Effect of AI on Cover Letter Tailoring}
\label{tab:tfidf_heterogeneity}

\begin{tabular}{lll}
\toprule
 & \multicolumn{2}{c}{Cover Letter Tailoring} \\ 
\cmidrule(lr){2-3}
 & (1) & (2) \\
\hline
$\text{Post}_t \times \text{Access}_i$ & 0.0268*** & 0.0218*** \\
 & (0.0032) & (0.0032) \\ 
$\text{Post}_t \times \text{Access}_i \times \text{Pre-AI Ability}_i$ & -0.0073* & -0.0073* \\
 & (0.0031) & (0.0031) \\ 
\hline
Bidder FE & \checkmark & \checkmark \\
Month FE & \checkmark & \checkmark \\
Outcome Mean & 0.1474 & 0.1474 \\ 
Outcome Std. Dev. & 0.1634 & 0.1634 \\ 
\hline
\textit{N} & 2,511,592 & 2,511,592 \\
\bottomrule
\end{tabular}
\begin{tablenotes}
    \small
    \item \textit{Table notes:} The table reports OLS regression results with text similarity (TFIDF) as the outcome. Heterogeneous effects are based on the worker's average TFIDF (ranges from 0 to 1) in the period before the AI tool was introduced. Column (2) additionally controls for a PostGPT $\times$ Treat term. Wage bids and experience levels are controlled for. Standard errors are reported in parentheses and clustered at the bidder level. *** p $<$ 0.01, ** p $<$ 0.05, * p $<$ 0.1.
\end{tablenotes}

\end{table}

To explore whether the impact of the AI writing tool differs based on a worker's pre-AI writing ability, we augment our baseline ITT specification by interacting $\text{AI Bid}_{it}$ with a worker's pre-AI writing ability $q_i$ as shown below: 
\[
y_{ijt} = \delta_i + \gamma_t + \beta_1 \text{Post AI}_t\times\text{Access}_i + \beta_2 \text{Post AI}_t\times\text{Access}_i \times q_i + x_{ijt}+\epsilon_{ijt}.
\]
The pre-AI writing ability $q_i$ is measured as the average of the text similarity of worker $i$'s cover letters before the introduction of the AI writing tool. We standardize $q_i$ to have mean 0 and standard deviation 1. We report the results in Table \ref{tab:tfidf_heterogeneity}. 

We find that the AI tool \emph{substitutes} for workers' pre-existing writing ability. Workers who wrote better cover letters experienced smaller improvements in cover letter tailoring after gaining access to the AI tool compared to those who previously struggled to do so. The top writer experienced, on average, a 27.23\% smaller gain compared to the worst writer with treatment. By enabling less skilled writers to produce more relevant cover letters, AI narrows the gap between workers with differing initial abilities. This suggests that the tool may dilute the informational value of cover letters, a hypothesis we formally test in Section \ref{sec: ge}.

\subsection{The Impact of AI on Callbacks}

Having established that AI improves the tailoring of workers' cover letters, we now use the same specifications to study the impact of AI on a key labor market outcome, the probability of receiving a callback. In our setting, as in most labor markets, employers often reach out to a subset of the bidders for a private conversation before making the final hiring decision, which we refer to as a callback. We observe which cover letters resulted in a callback and use that as the outcome variable.

\begin{table}[hbtp]
\centering
\caption{Effect of AI on Callbacks}
\label{tab:chatted_regression}
\begin{tabular}{lcccc}
\toprule
 & \multicolumn{4}{c}{Callback} \\
 & (1) ITT & (2) LATE & (3) ITT & (4) LATE \\ 
\hline
AI & 0.0087*** & 0.0683*** & 0.0043* & 0.0356* \\
   & (0.0017)  & (0.0144)  & (0.0020) & (0.0170)  \\
\hline
Bidder FE & \checkmark & \checkmark & \checkmark & \checkmark \\
Month FE   & \checkmark & \checkmark & \checkmark & \checkmark \\
$\text{PostGPT}_t \times \text{Access}_{i}$ &  &  & \checkmark & \checkmark \\
Outcome Mean & 0.0707 & 0.0707 & 0.0707 & 0.0707 \\ 
Outcome Std. Dev.  & 0.2564 & 0.2564 & 0.2564 & 0.2564 \\ 
\hline
\textit{N} & 2,511,592  & 2,511,592 & 2,511,592 & 2,511,592 \\
\bottomrule
\end{tabular}
\begin{tablenotes}
    \small \item \textit{Table notes:} A callback is defined as being chatted with by the employer after submitting a bid. Column (1) and (3) report the ITT (the coefficient of PostAI $\times$ Access in the two-way fixed effect specification) with and without the PostGPT $\times$ Access term, respectively. Column (2) and (4) report the LATE with and without the PostGPT $\times$ Access term, where we instrument for the usage of AI using Post $\times$ Access. Wage bids and experience levels are controlled for. *** p $<$ 0.01, ** p $<$ 0.05, * p $<$ 0.1.
\end{tablenotes}
\end{table}

%Table \ref{tab:chatted_regression} contains the coefficients. We find that the introduction of the AI Bid Writer increases the probability of receiving a callback for those with access. AI access increases the probability of receiving a callback by 0.43 percentage points and AI usage increases the probability of receiving a callback by 3.56 percentage points. Given the baseline callback rate of 7.02\% among the treated, the LATE estimate corresponds to a relative increase in interview probability of 51\%. 

Table \ref{tab:chatted_regression} shows that access to the AI Bid Writer increases the probability of receiving a callback by 0.43 percentage points (ITT). Relative to the $7.02\%$ baseline callback rate among the treated, this amounts to an increase of about $6\%$. When we instrument for usage (LATE), the implied effect rises to 3.56 percentage points ($51\%$ of the baseline rate), which reflects the returns for the subset of workers who employed the tool.

\begin{figure}[h!]
    \centering
    \caption{Dynamic Effects of AI on Callbacks (ITT)}
    \begin{minipage}{0.8\textwidth}
        \includegraphics[width=\linewidth]{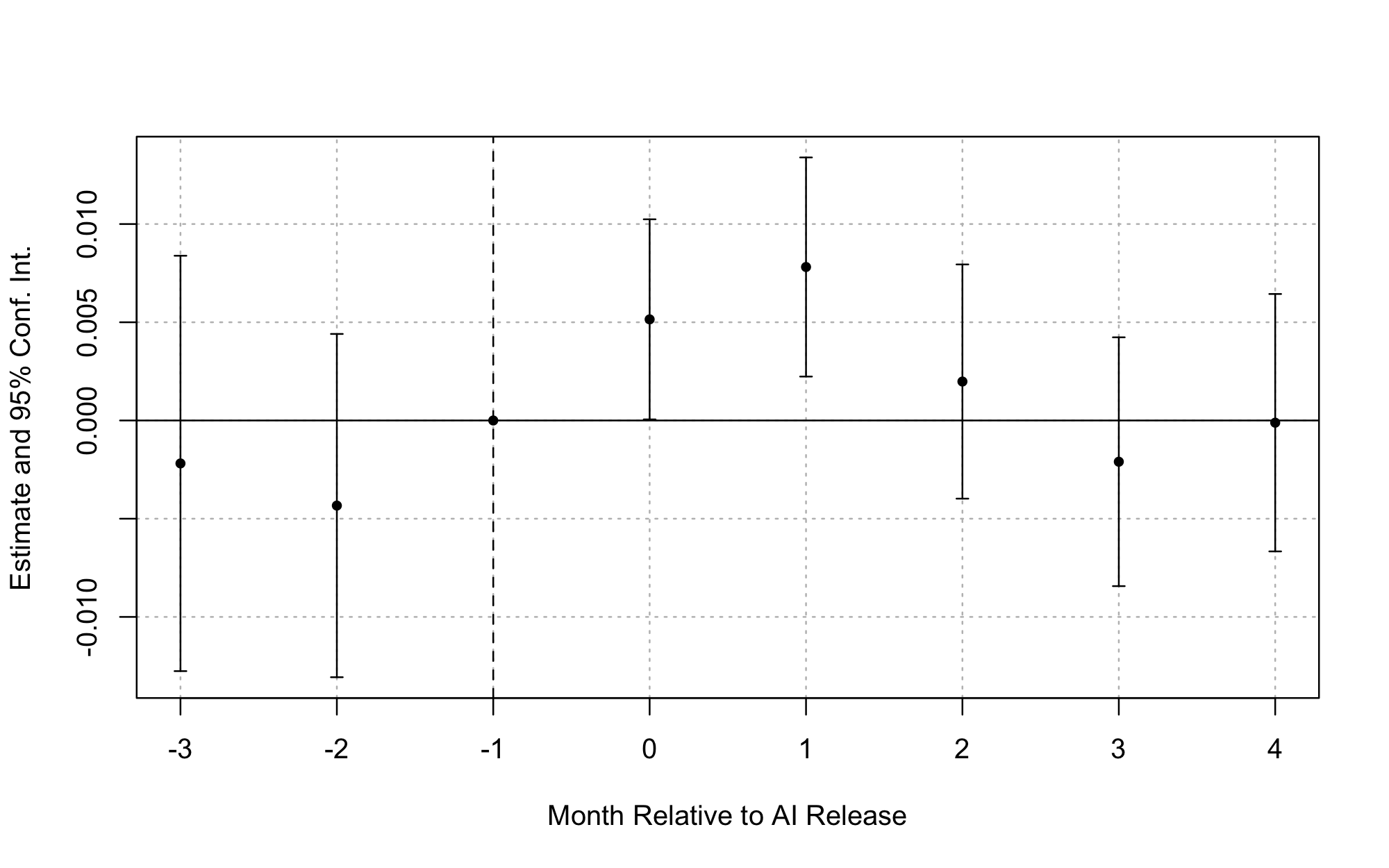}
        {\footnotesize \textit{Notes}: A callback is defined as being chatted with by the employer after submitting a bid. The above plots the coefficients and associated confidence intervals of the interactions between treatment and month indicators. Month 0 begins on April 19, 2023, the release of the tool, whereas Month -1 serves as the reference week. Standard errors are clustered at the bidder level.\par}
    \end{minipage}
    \label{fig:chatted_did}
\end{figure}

Figure \ref{fig:chatted_did} plots the coefficients and confidence intervals from the dynamic ITT specification. We find that AI access significantly increases the probability of receiving a callback during the first two months after the tool’s introduction, but the effect subsequently tapers off. By contrast, no such tapering is observed in the dynamic effects of AI on cover-letter tailoring. One possible interpretation is that, after two months, employers became more adept at recognizing AI-generated cover letters—a change not captured by our textual similarity measure.

%We consider the 94\% increase in the probability of receiving a callback to be substantial, yet within the range of effects found in studies examining the determinants of callback rates. For comparison, \citet{bertrand2004emily} find that changing the job applicant's name from an African American-sounding name to a white-sounding name increases the callback rate by 50\%. Our results suggest that generic cover letters are a barrier to receiving callbacks. Tools that help workers better customize the cover letters through just one click help workers get a foot in the door. At least at the individual level, workers benefit from the AI writing tool because it gets them closer to landing the job. 

We consider the $51\%$ increase in the probability of receiving a callback to be substantial, and comparable in magnitude to effects documented in prior work on the determinants of callback rates. For instance, \citet{bertrand2004emily} find that changing a job applicant’s name from an African American–sounding name to a white-sounding name raises the callback rate by about $50\%$. Our results suggest that generic cover letters represent an important barrier to receiving callbacks. Tools that help workers quickly customize their cover letters enable them to better signal interest in the project and convey alignment with the role, thereby increasing their chances of passing the initial screening stage and being invited to interview. At least at the individual level, workers clearly benefit from the AI writing tool because it increases their chances of advancing toward employment.

\FloatBarrier

\section{\label{sec: ge} The Market-Level Impact of Generative AI}

An important advantage of our setting is that the AI tool was adopted in a substantial share of applications, allowing us to analyze market-level impact. In this section, we examine how the tool’s introduction affected the signaling power of cover letters, as well as that of past work experience—a signal unaffected by AI. We conclude by presenting patterns in overall match rates.

\subsection{The Signaling Power of Cover Letters}

As previously shown, the AI writing tool had a larger impact on workers who previously wrote more generic cover letters. In other words, the AI writing tool substitutes from workers' existing ability. Therefore, AI potentially weakens the relationship between workers' innate ability and the cover letters that they produce, making cover letters a less informative signal. We now formally test this hypothesis by examining how the correlation between cover letter text similarity and hiring decisions changes after the release of the tool. 

Our approach follows the specification:
\[
y_{ijt} = \delta_i + \gamma_t + \beta_1 \text{Cover Letter Tailoring}_{ijt} + \beta_2\text{PostAI}_t \times \text{Cover Letter Tailoring}_{ijt}+x_{ijt}+\epsilon_{ijt},
\]
where $y_{ijt}$ is the outcome (either selection or interview) of the application by worker $i$ for job $j$ in week $t$, and $x_{ijt}$ includes the wage bid and worker's reputation ranking.

\begin{table}[hbtp]
\centering
\caption{Signaling Power of Cover Letters: Pre vs. Post AI}
\label{tab:geappendix1}
\begin{tabular}{lcc}
\toprule
 & Interview & Offer \\ 
\hline
$\text{Cover Letter Tailoring}_{ijt}$ & 0.0485*** & 0.0056*** \\ 
& (0.0027) & (0.0007) \\ 
$\text{Post}_t\times \text{Cover Letter Tailoring}_{ijt}$  & -0.0245*** & -0.0044***  \\
& (0.0030) & (0.0007) \\
\hline
Bidder FE & \checkmark & \checkmark  \\
Month FE & \checkmark & \checkmark  \\
Outcome Mean & 0.0647 & 0.0053 \\ 
\hline
\textit{N} & 5,499,707 & 5,499,707 \\
\bottomrule
\end{tabular}
\begin{tablenotes}
    \small \item \textit{Table notes:} Wage bids and experience levels are controlled for. Standard errors are reported in parentheses and clustered at the bidder level. *** p $<$ 0.01, ** p $<$ 0.05, * p $<$ 0.1.
    \end{tablenotes}

\end{table}

As Table \ref{tab:geappendix1} shows, before the introduction of the AI tool, the textual similarity of the cover letter strongly predicts the worker's chances of receiving an interview and receiving an offer. However, after the tool was introduced, that predictive power decreases. The correlation between cover letter tailoring and callback rates decreases by 50.51\%. The correlation between cover letter tailoring and whether the worker receives an offer decreases by 78.57\%. Appendix Tables \ref{tab:ge} and \ref{tab:geappendix2} report robustness checks: one specification without bidder fixed effects and another that includes the interaction between $\text{Post GPT}_t$ and $\text{Cover Letter Tailoring}_{ijt}$. In both cases, the estimates remain of similar magnitude.

To evaluate whether these effects are persistent, we additionally perform the following specification: 
\[
y_{ijt} = \delta_i + \gamma_t + \sum_{k \neq -1} \beta_k \, \mathbf{1}\{t = k\}\text{Cover Letter Tailoring}_{ijt} + \text{PostGPT}_t \times \text{Cover Letter Tailoring}_{ijt} +x_{ijt}+\epsilon_{ijt},
\]
where $t$ is month, $y_{ijt}$ is whether the cover letter by worker $i$ for job $j$ in month $t$ received a callback, $q_{ijt}$ is the cover letter tailoring, and $x_{ijt}$ includes the usual controls. 
Figure \ref{fig:dynamic_chatted} presents the coefficients on month indicators times textual similarity scores, which are significantly negative in several months following the rollout of AI Bid Writer, suggesting that having a higher TF-IDF is less advantageous post AI-rollout. In Appendix Figure \ref{fig:cont_event}, we present the same specification without bidder fixed effects.

\begin{figure}[htbp]
    \centering
    \caption{Post AI Roll-out, Cover Letters Persistently Less Predictive of Callback}
    \begin{minipage}{0.8\textwidth}
        \includegraphics[width=\linewidth]{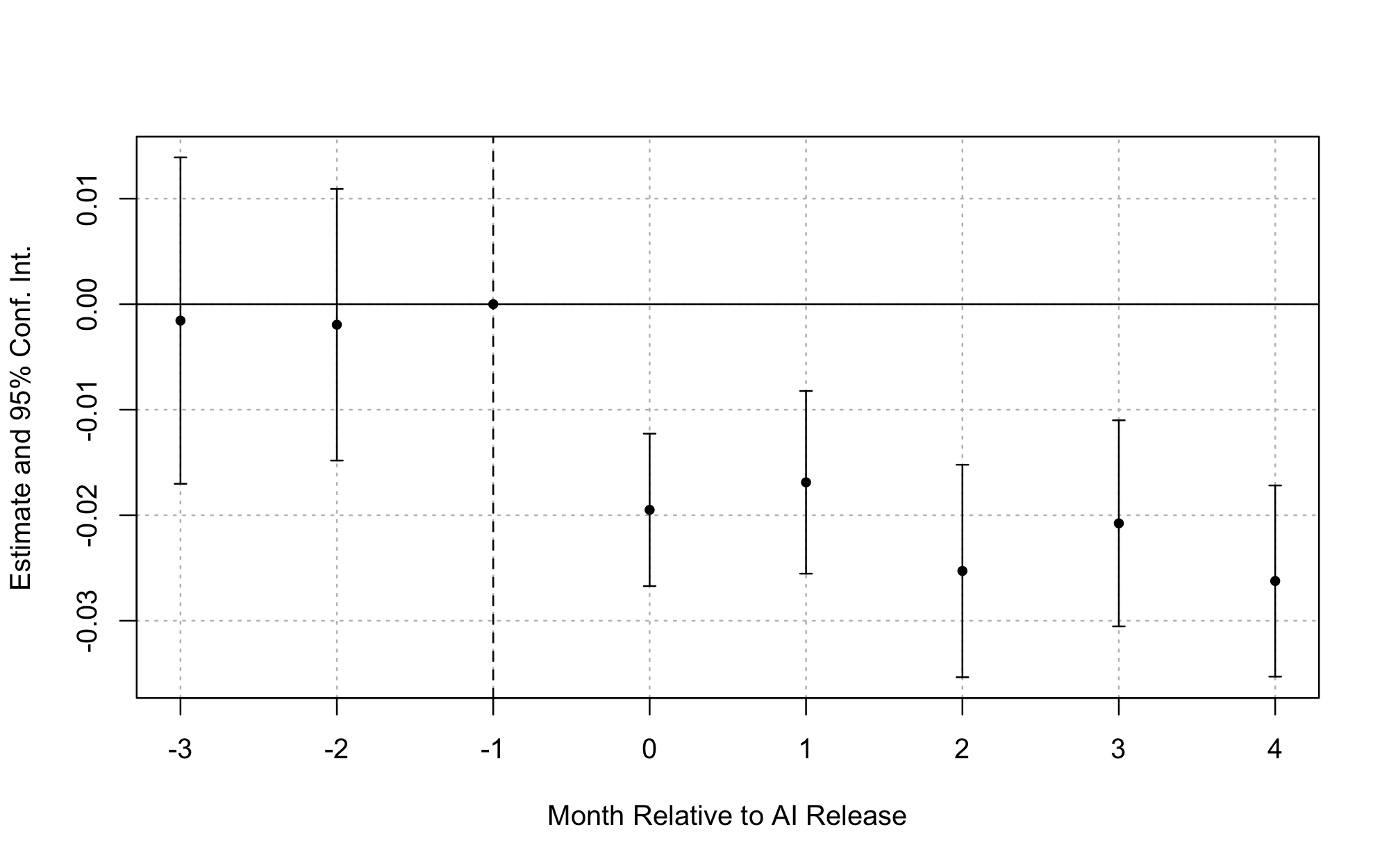}
        {\footnotesize \textit{Figure Notes}: Bidder fixed effects are controlled for. Interaction term event study coefficients on Month $\times$ Textual Similarity with interview as the outcome. Wages, experience levels, bidder, and month FEs are controlled for. Coefficients clustered at the bidder level.\par}
    \end{minipage}
    \label{fig:dynamic_chatted}
\end{figure}

A similar pattern holds when examining being hired as the outcome, as shown in Figure \ref{fig:dynamic_winner}, where textual similarity has a persistently reduced correlation with hiring in the months following the rollout of the tool. In Appendix Figure \ref{fig:cont_event_winner}, we additionally present the results without controlling for bidder fixed effects. Together, these findings suggest a weakening of the signaling value of well-tailored cover letters. Despite only a minority of bids utilizing AI-generated proposals, the tool’s availability disrupts the traditional link between cover letter tailoring and intrinsic bidder ability, reducing employers’ reliance on textual similarity as a key hiring signal.

\begin{figure}[htbp]
    \centering
    \caption{Post AI Roll-out, Cover Letters Persistently Less Predictive of Hiring}
    \begin{minipage}{0.8\textwidth}
        \includegraphics[width=\linewidth]{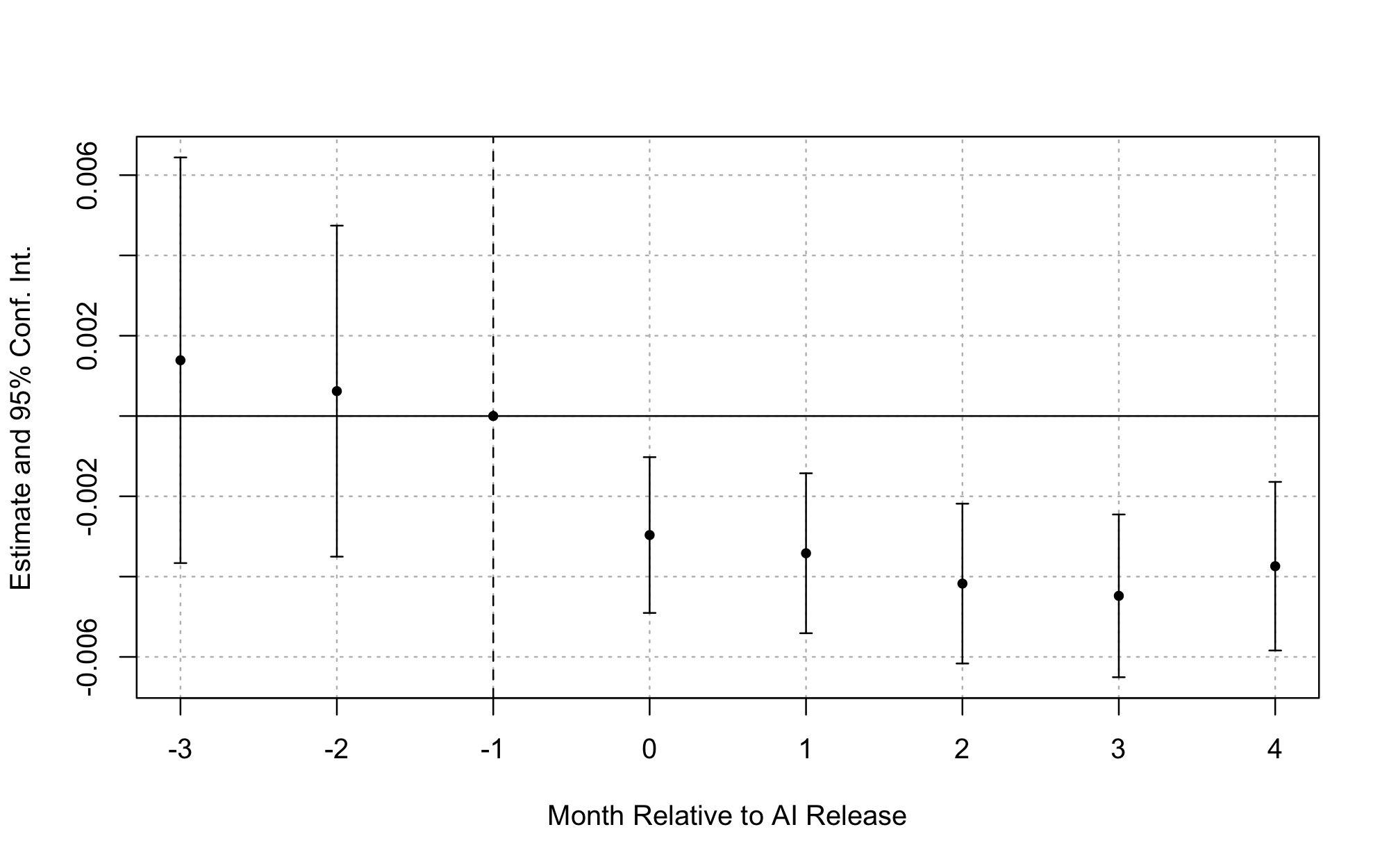}
        {\footnotesize \textit{Figure Notes}: Interaction term event study coefficients on Month $\times$ Textual Similarity with being hired as the outcome. Wages, experience levels, bidder, and month FEs are controlled for. Standard errors clustered at the bidder level. \par}
    \end{minipage}
    \label{fig:dynamic_winner}
\end{figure}

\subsection{The Signaling Power of Past Work Experiences}

The weakening of cover letters as a signaling device naturally raises the question of which alternative signals employers use to evaluate applicants. If generative AI reduces the informational content of written applications, then employers are likely to shift toward other observable and relatively harder-to-manipulate indicators of worker quality. On Freelancer.com, the most salient alternative is the review score—a platform-determined summary measure based on the volume and quality of a worker’s past reviews. Crucially, this score governs the default ordering of applications as presented to employers. In practice, therefore, the bid score not only aggregates prior performance but also shapes the attention set of employers by determining which applicants appear most prominently at the top of the list.

To assess whether employers indeed substituted toward this signal, we re-estimate our baseline specification replacing cover letter tailoring with bid score rank percentile as the key regressor. The results are presented in Table \ref{tab:gerank}. We find that, following the rollout of the AI tool, the predictive power of bid score rank percentile for interviews increases substantially. Specifically, the correlation between rank percentile and the probability of being interviewed rises by roughly 5\% percent relative to the pre-period. By contrast, we detect no corresponding increase in the predictive power of rank percentile for ultimate hiring decisions. This divergence suggests that employers increasingly rely on platform-determined ranks when forming shortlists of candidates to engage with, but that final hiring outcomes continue to depend on additional factors beyond platform-calculated rankings.

\begin{table}[hbtp]
\centering
\caption{Power of Rank Percentile: Pre vs. Post AI}
\label{tab:gerank}
\begin{tabular}{lcc}
\toprule
 & Interview & Offer \\ 
\hline
$\text{Rank Percentile}_{ijt}$ & 0.2223*** & 0.0044*** \\ 
& (0.0037) & (0.0004) \\ 
$\text{Post AI}_t\times \text{Rank Percentile}_{ijt}$  &  0.0116*** &  -0.0001 \\
& ( 0.0024) & (0.0004) \\
\hline
Month FE & \checkmark & \checkmark  \\
Bidder FE & \checkmark & \checkmark  \\
Outcome Mean & 0.0647 & 0.0053 \\ 
\hline
\textit{N} & 5,499,707 & 5,499,707 \\
\bottomrule
\end{tabular}
\begin{tablenotes}
    \small \item \textit{Table notes:} Rank Percentile is calculated as $1 - \frac{Bid Rank}{Number of Bids}$, with $Bid Rank$ being determined by a platform-specific score. Wage bids are controlled for. Standard errors are reported in parentheses and clustered at the bidder level. *** p $<$ 0.01, ** p $<$ 0.05, * p $<$ 0.1.
    \end{tablenotes}

\end{table}

To assess the robustness of these findings, we estimate a dynamic event-study specification interacting month indicators with bid score. The results, visualized in Figure \ref{fig:gerank_cont}, corroborate the regression estimates: employers generally placed greater weight on bid scores in evaluating which applicants to interview following the tool’s release. Taken together, these results are consistent with the view that the introduction of generative AI diminished the utility of cover letters and may have prompted employers to fall back on reputation-based signals already embedded in the platform’s ranking algorithm. While this shift helped sustain screening efficiency in the short run, it also raises distributional concerns: established workers with stronger review histories may benefit disproportionately from such a reallocation of attention, while new entrants with weaker track records may find it increasingly difficult to break into the market.

\begin{figure}[htbp]
    \centering
    \caption{Post AI Roll-out, Bid Score More Predictive of Interviewing}
    \begin{minipage}{0.8\textwidth}
        \includegraphics[width=\linewidth]{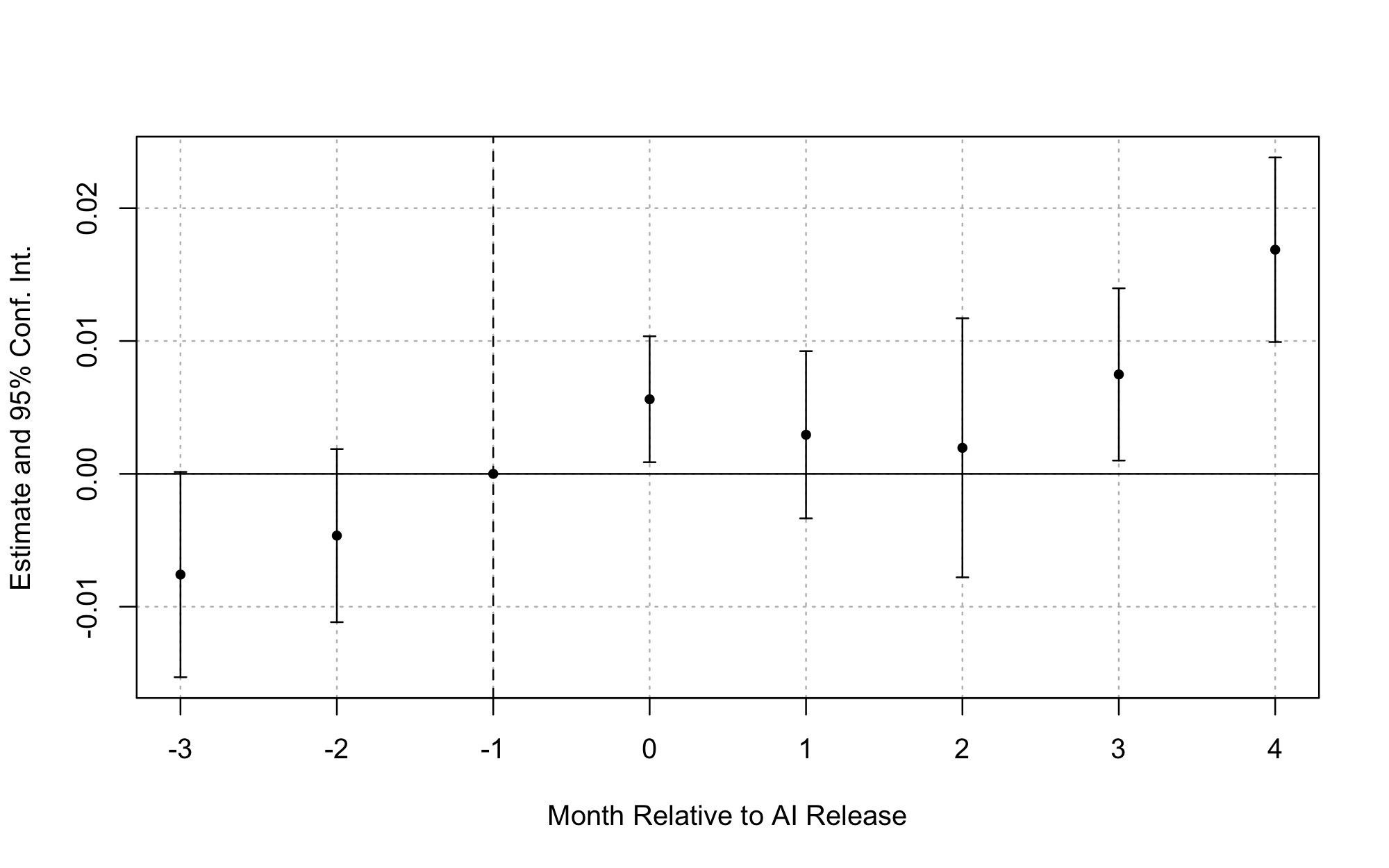}
        {\small \emph{Figure Notes}: Interaction term event study coefficients on Month $\times$ Rank Percentile with being interviewed as the outcome. Rank Percentile is calculated as $1 - \frac{Bid Rank}{Number of Bids}$, with $Bid Rank$ being determined by a platform-specific score. Wages, bidder, and month FEs are controlled for. \par}
    \end{minipage}
    \label{fig:gerank_cont}
\end{figure}

\subsection{Overall Hiring Rates}

A natural question that arises from the preceding analysis is whether the decline in the signaling value of cover letters, combined with employers’ partial substitution toward alternative signals, ultimately affected overall market outcomes such as hiring, interviewing, and job completion rates. Figure \ref{fig:job_means} addresses this directly by plotting the share of jobs that resulted in at least one interview, an award, or a successful completion before and after the introduction of the AI tool. Notably, across all three measures we find no discernible changes. The share of jobs with at least one interview remains stable, as do the probabilities of job award and eventual completion.

\begin{figure}[htbp]
    \centering
    \caption{Share of jobs Interviewing, Awarding, and Completing}
    \begin{minipage}{0.8\textwidth}
        \includegraphics[width=\linewidth]{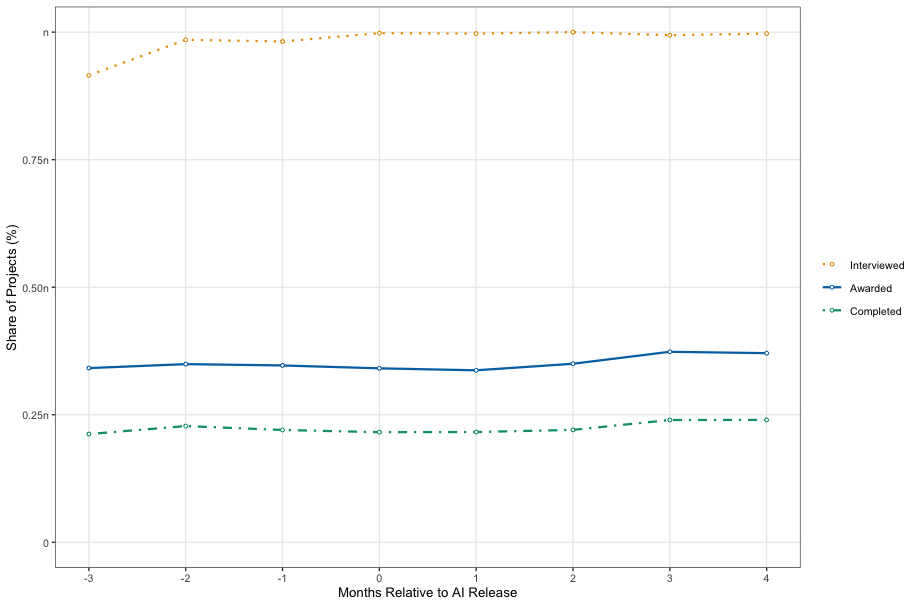}
        {\small \emph{Figure Notes}: The above figure plots the percentage of jobs such that i) the employer engaged in at least one interview ii) the employer awarded the job iii) the employee successfully completed the job. A total of 106,714 jobs are represented. Due to data confidentiality, the y-axis is normalized. The maximum value is shown as ‘n’, and other ticks represent fractions of n. \par}
    \end{minipage}
    \label{fig:job_means}
\end{figure}

The absence of aggregate effects on hiring outcomes may appear surprising given the substantial reallocation of informational content documented in Sections 6.1 and 6.2. However, this pattern is consistent with the interpretation that while AI disrupted one channel of signaling—cover letters—employers successfully shifted toward others, particularly past reputation scores, which partially substituted for the diminished informativeness of written applications. In this sense, the market may have adjusted to maintain equilibrium matching rates, despite changes in the informational environment.

We view this result as preliminary and emphasize two important caveats. First, our analysis focuses on the immediate aftermath of the AI tool’s release, when cover letters remained one of several signals available to employers. Over time, as AI tools become more sophisticated and begin to erode additional dimensions of signaling, employers may find themselves with fewer credible indicators on which to rely. Second, our setting involves short-term freelance jobs where past reviews and platform-calculated rankings are readily available and highly salient. In other labor markets—such as those for first-time job seekers or college admissions—alternative signals may be weaker or absent, leaving fewer substitutes when written applications lose their informational value.

%Taken together, our findings suggest that the immediate equilibrium impact of AI on hiring rates is muted, but this should not be interpreted as evidence that the technology is innocuous for market-level matching. Instead, this could reflect employers’ ability to substitute toward other signals in the short run. Whether such substitutability persists as AI spreads more broadly across application components remains unclear.

Taken together, our findings suggest that the immediate equilibrium impact of AI on hiring rates is limited, but this should not be interpreted as evidence that the technology is innocuous for market-level matching. Rather, it may reflect employers’ ability to substitute toward other signals in the short run, which can generate distributional consequences: workers with more experience on the platform tend to hold higher bid scores. Whether such substitutability will persist as AI adoption spreads more broadly across application components remains an open question.

\FloatBarrier

\section{\label{sec: human_ai} Human-AI Interaction}

In the final section of the paper, we leverage unique data that capture worker interactions with the generative AI cover-letter writing tool in a real-world setting. We examine the extent to which workers edit the AI-generated drafts, the factors that shape their editing effort, and whether that effort influences hiring outcomes.

\subsection{Do Workers Edit the AI-Generated Output?}
While the AI tool aligns the content of the letter with the job requirements, workers are free to further customize the letters. We first document whether workers edit the output generated by the AI tool or simply submit it unchanged. 

To measure editing, we use the elapsed time between when a worker clicked the AI cover-letter generation tool and when the application was submitted, interpreting this duration as editing time. Since this measure is only available for bids that used the tool, our analysis is restricted to AI-assisted applications.

Figure~\ref{fig:hist_min_5plus} shows the distribution of editing times. The majority of AI-assisted cover letters appear to be submitted with little or no modification: over 75\% were finalized within one minute of clicking the tool. By contrast, a small share of workers invested more effort in customization, with roughly 5\% of bids submitted at least five minutes after the AI draft was generated.

\begin{comment}
\begin{figure}[htbp]
  \centering
  \includegraphics[width=\linewidth]{}
  \caption{Histogram of Time Spent Editing ($\leq \text{95th Percentile}$)}
  \label{fig:hist_min}
\end{figure}
\end{comment}
\begin{figure}[htbp]
  \centering
  \caption{Histogram of Time Spent Editing}
  \includegraphics[width=\linewidth]{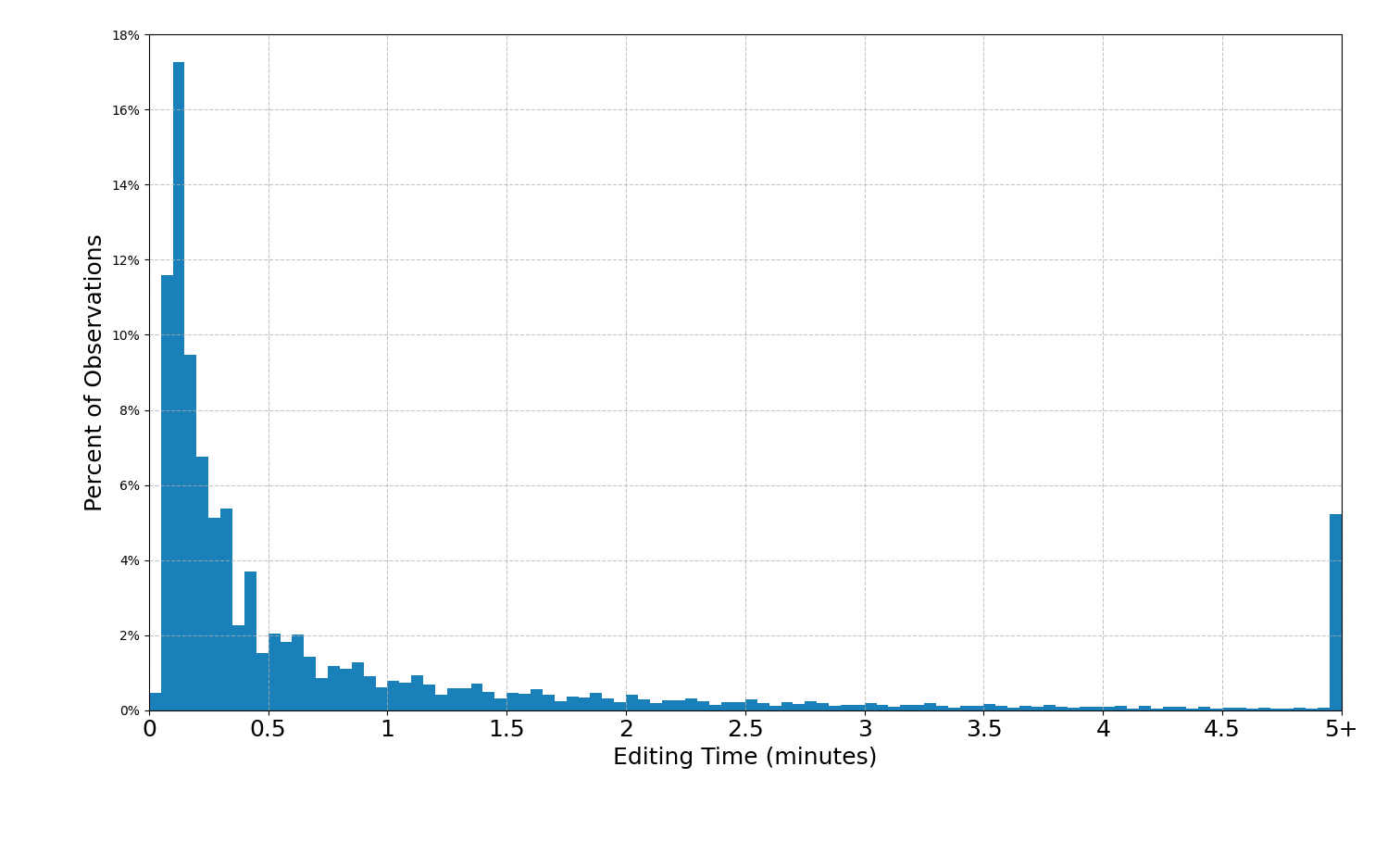}

  \label{fig:hist_min_5plus}
      \medskip
      \begin{tablenotes}
    \small \item \textit{Figure notes:} Bid-level histogram. Each observation represents the time, in minutes, between activating the tool and submitting the bid. Observations of 5 minutes or more are grouped in the final bin labeled ``5+''.
    \end{tablenotes}
\end{figure}
\begin{comment}
\begin{table}[htbp]\centering
\caption{Regression of TF--IDF on Editing Time}
\label{tab:tfidf_time}
\begin{tabular}{l c}
\toprule
 & (1) \\
\midrule
\textit{Dependent variable:} TF--IDF \\
\addlinespace
Time spent editing (minutes) & -0.001198*** \\
& (0.000197) \\
\addlinespace
Constant & 0.357155*** \\
& (0.000230) \\
\midrule
Observations & 192,930 \\
R-squared    & 0.1026 \\
Worker FE    & Yes \\
\bottomrule
\multicolumn{2}{l}{\footnotesize Robust standard errors in parentheses. *** $p<0.01$, ** $p<0.05$, * $p<0.1$.} \\
\end{tabular}
\end{table}
\end{comment}
We next examine which types of workers invest effort in editing the tool’s output. Specifically, we estimate a worker-level regression of the average time between clicking the tool and submitting a bid on the worker’s average pre-period TF-IDF score. The goal is to assess how workers with different levels of cover-letter writing skill interact with AI: conditional on using the tool, do stronger writers simply accept the output, or do they devote more time to tailoring it?

Table~\ref{tab:mean_time_tfidf} reports the results. The coefficient is positive and highly significant, indicating that higher pre-AI writing ability is associated with greater editing effort. To gauge the magnitude, moving from the 10th to the 90th percentile of pre-period cover-letter tailoring between treated workers (a difference of $0.23$) corresponds to an increase of about $0.69$ minutes in average editing time. Relative to the bid-level distribution, this increase represents roughly 59\% of the mean ($1.17$ minutes) or about 0.20 standard deviations. These findings highlight heterogeneity in how workers engage with AI: more skilled writers invest additional effort in customizing their letters. Appendix Figure~\ref{fig:hist_mean_5plus_two} illustrates this pattern. Workers below the median in pre-AI writing ability show a pronounced concentration of editing times between 0 and 1 minute, suggesting they often accept the AI output with minimal changes, whereas those above the median spread their editing effort over longer durations.

\begin{table}[htbp]\centering
\caption{Determinants of Time Spent Editing AI-Generated Output}
\label{tab:mean_time_tfidf}
\begin{tabular}{l c}
\toprule
 & Average Editing Time (Minutes)  \\
\midrule
\addlinespace
Average pre--AI TF-IDF & 3.036*** \\
& (0.605) \\
\addlinespace
Constant & 1.235*** \\
& (0.084) \\
\midrule
$N$ & 2,462 \\
%Mean of DV &  1.62 \\
\bottomrule
\end{tabular}
\begin{tablenotes}
    \small \item \textit{Table notes:} The unit of observation is a worker. The outcome variable is the worker’s mean TF--IDF score (ranging from 0 to 1) in the period prior to the AI tool’s introduction. Editing time is measured as the number of minutes between activating the tool and submitting a bid. Average editing time refers to a worker’s mean editing time across bids. The sample consists of all post-rollout bids that used the tool and had an editing time of less than 60 minutes. Standard errors, clustered at the worker level, are reported in parentheses. *** $p<0.01$, ** $p<0.05$, * $p<0.1$.
\end{tablenotes}
\end{table}

\subsection{Is Workers’ Editing Effort Consequential?}
Having documented patterns in how workers interact with the tool, we now examine whether workers' time spent editing influences hiring outcomes, namely the likelihood of receiving a callback and the likelihood of receiving a job offer. 

To this end, we regress indicators for whether a bid received a callback or an offer on the time that worker spent editing the AI-generated cover letters. We control for worker fixed effects, as well as normalized wages and review score rank percentiles. 

\begin{table}[htbp]\centering
\caption{Effects of Time Spent Editing AI-Generated Output on Hiring Outcomes}
\label{tab:time_min}
\begin{tabular}{lcc}
\toprule
 & Offer & Callback \\
\midrule
\addlinespace
Time spent editing (minutes) &
0.000885*** &
0.001840*** \\
& (0.000195) & (0.000351) \\
\midrule
$N$ & 192,930 & 192,930 \\
Worker FE    & \checkmark     & \checkmark \\
Outcome mean  & 0.006   & 0.068 \\
\bottomrule
\end{tabular}
\begin{tablenotes}
    \small \item \textit{Table notes:} All regressions include a constant and controls for normalized wage and rank percentile. Offer is a bid-level indicator equal to 1 if the bid won the job and 0 otherwise. Callback is a bid-level indicator equal to 1 if the worker who submitted the bid was interviewed for the task and 0 otherwise. Time spent editing is measured in minutes from the moment the worker activated the tool to the moment the bid was submitted. The sample consists of all post-rollout bids that used the tool and had an editing time of less than 60 minutes. Standard errors, clustered at the worker level, are reported in parentheses. *** $p<0.01$, ** $p<0.05$, * $p<0.1$.

\end{tablenotes}
\end{table}
\begin{comment}
    \addlinespace
Wage (normalized) &
-0.000119** &
-0.000437** \\
& (0.000050) & (0.000213) \\
\addlinespace
Rank percentile &
-0.005397*** &
-0.243934*** \\
& (0.001836) & (0.011187) \\
\addlinespace
Constant &
0.006424*** &
0.149662*** \\
& (0.000682) & (0.003992) \\
\end{comment}

The results are reported in Table \ref{tab:time_min}. Column (1) reports results using job offer as the dependent variable. The coefficient on editing time is positive and highly significant: spending more minutes tailoring a cover letter after using the tool increases the probability of winning. A one–standard deviation increase in editing time (3.5 minutes) is associated with a 0.31 percentage-point higher probability of winning, or about 52\% of the average win rate among AI-assisted bids (0.6\%).

For callbacks (Column 2), the coefficient on editing time is likewise positive and highly significant. A one–standard deviation increase corresponds to a 0.64 percentage-point higher probability of chatting, roughly 9\% of the average chat rate among AI-assisted bids (6.8\%).

These results underscore the importance of how workers interact with AI. Section~\ref{sec:pe} showed that access to the AI tool increased the likelihood of receiving a callback, at least in the first two months. Here we add another layer of nuance: spending more time refining one’s cover letter after adopting the tool is associated with higher probabilities of both receiving a callback and ultimately winning the job. However, these results should not be interpreted as evidence of a causal relationship: the decision to spend more time editing a cover letter is unlikely to be random and may correlate with unobserved bid characteristics that employers find appealing. Further work is needed to more fully understand human–AI interaction in the context of job applications.

\begin{comment}
\begin{table}[htbp]\centering
\caption{Regression Results: Log of Minutes}
\begin{tabular}{lcc}
\toprule
 & (1) & (2) \\
\midrule
\textit{Dependent variable} & Winner & Chatted pre-award \\
\addlinespace
Log time spent editing &
0.002202*** &
0.005226*** \\
& (0.000495) & (0.000974) \\
\addlinespace
Wage (normalized) &
-0.000117** &
-0.000441** \\
& (0.000050) & (0.000214) \\
\addlinespace
Rank percentile &
-0.005230*** &
-0.244220*** \\
& (0.001826) & (0.011219) \\
\addlinespace
Constant &
0.009580*** &
0.157115*** \\
& (0.000795) & (0.004084) \\
\midrule
Observations & 192,784 & 192,784 \\
R-squared    & 0.0489  & 0.0493 \\
Worker FE    & Yes     & Yes \\
Mean of DV   & 0.006   & 0.068 \\
\bottomrule
\multicolumn{3}{l}{\footnotesize Standard Errors clustered at worker level in parentheses. *** $p<0.01$, ** $p<0.05$, * $p<0.1$} \\
\end{tabular}
\end{table}
\end{comment}

\section{\label{sec: conclusion} Conclusion}

This study provides empirical evidence on the labor market consequences of AI-assisted job applications. While AI tools allow freelancers to produce more polished and tailored applications with less effort, our findings suggest that they fundamentally reshape how employers interpret cover letters. The widespread adoption of AI-assisted writing diminishes the informational value of cover letters, weakening their role as a hiring signal. In response, employers place greater weight on alternative signals, such as past work experience, that are less susceptible to AI. We find no evidence that this shift in signaling has affected overall hiring rates. Finally, we show that although most workers submit AI-generated drafts with minimal revision, within a worker, more time spent editing is associated with higher hiring success.

Online labor platforms provide a unique data opportunity to study the growing use of AI in job applications. At the same time, important differences from conventional labor markets mean that further research is needed to assess whether these findings generalize more broadly. As AI technologies continue to advance, ongoing research will be essential to understand their implications for labor market signaling and screening.

\newpage
%\end{comment}
\nocite{*}
\bibliography{ref.bib}

@techreport{horton2024sorting,
  title={Sorting through Cheap Talk: Theory and Evidence from a Labor Market},
  author={Horton, John J and Johari, Ramesh and Kircher, Philipp},
  year={2024},
  institution={National Bureau of Economic Research}
}

@article{acemoglu2025simple,
  title={The simple macroeconomics of AI},
  author={Acemoglu, Daron},
  journal={Economic Policy},
  volume={40},
  number={121},
  pages={13--58},
  year={2025},
  publisher={Oxford University Press}
}

@misc{brinatti2023international,
  title        = {The International Price of Remote Work},
  author       = {Brinatti, Agostina and Cravino, Javier and Cavallo, Alberto and Drenik, Andres},
  year         = {2023},
  month        = may,
  note         = {Working paper},
  url          = {https://www.hbs.edu/ris/download.aspx?name=PPP_wages_May-23.pdf}
}

@article{frankel2019muddled,
  title   = {Muddled Information},
  author  = {Frankel, Alex and Kartik, Navin},
  journal = {Journal of Political Economy},
  volume  = {127},
  number  = {4},
  pages   = {1739--1776},
  year    = {2019},
  publisher = {University of Chicago Press},
  doi     = {10.1086/702932},
  url     = {https://www.jstor.org/stable/10.2307/27286496}
}

@techreport{galdin2025making,
  title={Making Talk Cheap: Generative AI and Labor Market Signaling},
  author={Galdin, Ana{\i}s and Silbert, Jesse},
  year={2025},
  institution={Working Paper}
}

@misc{robinson2024summer,
  author       = {Cheryl Robinson},
  title        = {Entrepreneurs, Experiencing The Summer Slowdown? Here’s How To Cope},
  howpublished = {\url{https://www.forbes.com/sites/cherylrobinson/2024/07/23/entrepreneurs-experiencing-the-summer-slowdown-heres-how-to-cope/}},
  note         = {Accessed 2025-08-28},
  year         = {2024},
  month        = jul,
  day          = {23},
  organization = {Forbes}
}

@misc{vozza2016summer,
  author       = {Stephanie Vozza},
  title        = {How to Use a Summer Slowdown to Advance Your Career},
  howpublished = {\url{https://www.fastcompany.com/3062058/how-to-use-a-summer-slowdown-to-advance-your-career}},
  note         = {Accessed 2025-08-28},
  year         = {2016},
  month        = jul,
  day          = {6},
  organization = {Fast Company}
}

@misc{goodwin2025survive,
  author       = {Goodwin, Tom},
  title        = {How to Survive the Summer Slowdown as a Freelancer},
  howpublished = {\url{https://freelancernews.co.uk/finding-freelance-work/how-to-survive-the-summer-slowdown-as-a-freelancer/}},
  note         = {Accessed 2025-08-28},
  year         = {2025},
  month        = apr,
  day          = {10},
  publisher    = {Freelancer News}
}

@article{wiles2025generative,
  title={Generative ai and labor market matching efficiency},
  author={Wiles, Emma and Horton, John J},
  journal={Available at SSRN 5187344},
  year={2025}
}

@article{horton2021job,
  title={Job-seekers send too many applications: Experimental evidence and a partial solution},
  author={Horton, John and Vasserman, Shoshana},
  journal={Manuscript, MIT},
  year={2021}
}

@article{weber2023testing,
  title={Testing of detection tools for AI-generated text},
  author={Weber-Wulff, Debora and Anohina-Naumeca, Alla and Bjelobaba, Sonja and Folt{\`y}nek, Tom{\'a}{\v{s}} and Guerrero-Dib, Jean and Popoola, Olumide and {\v{S}}igut, Petr and Waddington, Lorna},
  journal={International Journal for Educational Integrity},
  volume={19},
  number={1},
  pages={1--39},
  year={2023},
  publisher={Springer}
}

@techreport{wiles2023algorithmic,
  title={Algorithmic writing assistance on jobseekers’ resumes increases hires},
  author={Wiles, Emma and Munyikwa, Zanele T and Horton, John J},
  year={2023},
  institution={National Bureau of Economic Research}
}

@article{horton2019buyer,
  title={Buyer uncertainty about seller capacity: Causes, consequences, and a partial solution},
  author={Horton, John J},
  journal={Management Science},
  volume={65},
  number={8},
  pages={3518--3540},
  year={2019},
  publisher={INFORMS}
}

@techreport{agrawal2021ai,
  title={Ai adoption and system-wide change},
  author={Agrawal, Ajay K and Gans, Joshua S and Goldfarb, Avi},
  year={2021},
  institution={National Bureau of Economic Research}
}

@article{otis2024uneven,
  title={The uneven impact of generative AI on entrepreneurial performance},
  author={Otis, Nicholas and Clarke, Rowan and Delecourt, Solene and Holtz, David and Koning, Rembrand},
  year={2024}
}

@article{noy2023experimental,
  title={Experimental evidence on the productivity effects of generative artificial intelligence},
  author={Noy, Shakked and Zhang, Whitney},
  journal={Science},
  volume={381},
  number={6654},
  pages={187--192},
  year={2023},
  publisher={American Association for the Advancement of Science}
}

@article{hui2024short,
  title={The short-term effects of generative artificial intelligence on employment: Evidence from an online labor market},
  author={Hui, Xiang and Reshef, Oren and Zhou, Luofeng},
  journal={Organization Science},
  volume={35},
  number={6},
  pages={1977--1989},
  year={2024},
  publisher={INFORMS}
}

@article{brynjolfsson2025generative,
  title={Generative AI at work},
  author={Brynjolfsson, Erik and Li, Danielle and Raymond, Lindsey},
  journal={The Quarterly Journal of Economics},
  year={2025},
  publisher={Oxford University Press}
}

@inproceedings{singh2010prospect,
  title={PROSPECT: a system for screening candidates for recruitment},
  author={Singh, Amit and Rose, Catherine and Visweswariah, Karthik and Chenthamarakshan, Vijil and Kambhatla, Nandakishore},
  booktitle={Proceedings of the 19th ACM international conference on Information and knowledge management},
  pages={659--668},
  year={2010}
}

@article{hosanagar2024designing,
  title={Designing human and generative AI collaboration},
  author={Hosanagar, Kartik and Ahn, Daehwan},
  journal={arXiv preprint arXiv:2412.14199},
  year={2024}
}

@inproceedings{chakrabarty2025can,
  title={Can ai writing be salvaged? mitigating idiosyncrasies and improving human-ai alignment in the writing process through edits},
  author={Chakrabarty, Tuhin and Laban, Philippe and Wu, Chien-Sheng},
  booktitle={Proceedings of the 2025 CHI Conference on Human Factors in Computing Systems},
  pages={1--33},
  year={2025}
}

@techreport{agarwal2023combining,
  title={Combining human expertise with artificial intelligence: Experimental evidence from radiology},
  author={Agarwal, Nikhil and Moehring, Alex and Rajpurkar, Pranav and Salz, Tobias},
  year={2023},
  institution={National Bureau of Economic Research}
}

@article{cowgill2024does,
  title={Does AI Cheapen Talk? Theory and Evidence From Global Entrepreneurship and Hiring},
  author={Cowgill, Bo and Hernandez-Lagos, Pablo and Wright, Nataliya Langburd},
  journal={Theory and Evidence From Global Entrepreneurship and Hiring (July 26, 2024). Columbia Business School Research Paper},
  number={4702114},
  year={2024}
}

@article{agrawal2019artificial,
  title={Artificial intelligence: the ambiguous labor market impact of automating prediction},
  author={Agrawal, Ajay and Gans, Joshua S and Goldfarb, Avi},
  journal={Journal of Economic Perspectives},
  volume={33},
  number={2},
  pages={31--50},
  year={2019},
  publisher={American Economic Association 2014 Broadway, Suite 305, Nashville, TN 37203-2418}
}

@article{zhao2021embedding,
  title={Embedding-based recommender system for job to candidate matching on scale},
  author={Zhao, Jing and Wang, Jingya and Sigdel, Madhav and Zhang, Bopeng and Hoang, Phuong and Liu, Mengshu and Korayem, Mohammed},
  journal={arXiv preprint arXiv:2107.00221},
  year={2021}
}

@article{rambachan2023more,
  title={A more credible approach to parallel trends},
  author={Rambachan, Ashesh and Roth, Jonathan},
  journal={Review of Economic Studies},
  volume={90},
  number={5},
  pages={2555--2591},
  year={2023},
  publisher={Oxford University Press US}
}

@inproceedings{ramanath2018towards,
  title={Towards deep and representation learning for talent search at linkedin},
  author={Ramanath, Rohan and Inan, Hakan and Polatkan, Gungor and Hu, Bo and Guo, Qi and Ozcaglar, Cagri and Wu, Xianren and Kenthapadi, Krishnaram and Geyik, Sahin Cem},
  booktitle={Proceedings of the 27th ACM international conference on information and knowledge management},
  pages={2253--2261},
  year={2018}
}

@article{zhu2018person,
  title={Person-job fit: Adapting the right talent for the right job with joint representation learning},
  author={Zhu, Chen and Zhu, Hengshu and Xiong, Hui and Ma, Chao and Xie, Fang and Ding, Pengliang and Li, Pan},
  journal={ACM Transactions on Management Information Systems (TMIS)},
  volume={9},
  number={3},
  pages={1--17},
  year={2018},
  publisher={ACM New York, NY, USA}
}

@article{lavi2021consultantbert,
  title={consultantbert: Fine-tuned siamese sentence-bert for matching jobs and job seekers},
  author={Lavi, Dor and Medentsiy, Volodymyr and Graus, David},
  journal={arXiv preprint arXiv:2109.06501},
  year={2021}
}

@phdthesis{silbert,
  title={Equilibrium Provision of Information in Two-Sided Markets: Evidence from an Online Job Platform},
  author={Galdin, Ana{\"\i}s Soledad and Silbert, Jesse},
    year={2024},
  school={Princeton University}
}

@incollection{spence1978job,
  title={Job market signaling},
  author={Spence, Michael},
  booktitle={Uncertainty in economics},
  pages={281--306},
  year={1978},
  publisher={Elsevier}
}

@article{gans2024will,
  title={How will Generative AI impact communication?},
  author={Gans, Joshua S},
  journal={Economics Letters},
  volume={242},
  pages={111872},
  year={2024},
  publisher={Elsevier}
}

@article{altonji2001employer,
  title={Employer learning and statistical discrimination},
  author={Altonji, Joseph G and Pierret, Charles R},
  journal={The quarterly journal of economics},
  volume={116},
  number={1},
  pages={313--350},
  year={2001},
  publisher={MIT Press}
}

@article{bertrand2004emily,
  title={Are Emily and Greg more employable than Lakisha and Jamal? A field experiment on labor market discrimination},
  author={Bertrand, Marianne and Mullainathan, Sendhil},
  journal={American economic review},
  volume={94},
  number={4},
  pages={991--1013},
  year={2004},
  publisher={American Economic Association}
}

@techreport{OpenAI2023GPT4,
  title       = {GPT-4 Technical Report},
  author      = {{OpenAI}},
  institution = {OpenAI},
  year        = {2023},
  type        = {Technical Report},
  url         = {https://cdn.openai.com/papers/gpt-4.pdf}
}

@misc{MMLU2023,
  title        = {Measuring Massive Multitask Language Understanding},
  author       = {Hendrycks, Dan and Mazeika, Tom and Kadavath, Siddhartha and Arora, Aman and Gupta, Vivek and Jones, Austin and Doron, Yuning and Steinhardt, Jacob and Bashir, Mohammad and Turner, Alexander and Prost, Ashley and Ibarraran, Suzanne and Phang, Jason and Khan, Hessam and Vuong, Benjamin and Zhang, Jacob and Miller, Amanda and Zhai, Emily and Wu, Henry and Tao, Chong and Glaese, Adrian and Krueger, David and Ho, Jonathan and Baker, Liane and Vecchi, Federico and Schulman, John and Chainsurprise, Clemens and Carmichael, Christian and Ramesh, Aditya and Thoppilan, Rishabh and Brown, Tom and Amodei, Dario},
  year         = {2020},
  howpublished = {arXiv preprint arXiv:2009.03300},
  url          = {https://arxiv.org/abs/2009.03300}
}

@article{Deveci2023CoverLetters,
  author  = {Deveci, Can Deniz and Baker, Jason Joe and Sikander, Binyamin and Rosenberg, Jacob},
  title   = {A comparison of cover letters written by ChatGPT-4 or humans},
  journal = {Dan Med J},
  year    = {2023},
  volume  = {70},
  number  = {12},
  pages   = {A06230412},
  url     = {https://pubmed.ncbi.nlm.nih.gov/38018708/},
  note    = {Randomized non-inferiority trial; free article on PubMed (PMID: 38018708)}
}

@article{Sikander2023Intro,
  author  = {Binyamin Sikander and others},
  title   = {ChatGPT-4 and Human Researchers Are Equal in Writing Introductions Regarding Publishability, Readability, and Content Quality},
  journal = {PLoS / PubMed Central},
  year    = {2023},
  url     = {https://pmc.ncbi.nlm.nih.gov/articles/PMC10727453/}
}

@misc{Kosinski2023ToM,
  author       = {Kosinski, Michal},
  title        = {Theory of Mind May Have Spontaneously Emerged in Large Language Models},
  year         = {2023},
  howpublished = {arXiv preprint arXiv:2302.02083},
  url          = {https://arxiv.org/abs/2302.02083},
  note         = {version v1, PDF: https://arxiv.org/vc/arxiv/papers/2302/2302.02083v1.pdf}
}

@unpublished{cao2025fairness,
  title        = {AI for Fairness? The Role of Generative Tools in Shaping Freelancer Success},
  author       = {Cao, Jingman and Zhu, Yuting},
  year         = {2025},
  month        = {August},
  note         = {Available at SSRN: 5346514},
  url          = {https://ssrn.com/abstract=5346514}
}

@unpublished{cao2025aiwrites,
  title        = {When AI Writes Alike: Implications for Efficiency on Matching Platforms},
  author       = {Cao, Jingman and Zhu, Yuting},
  year         = {2025},
  month        = {August},
  note         = {Job Market Paper},
  institution  = {National University of Singapore},
  url          = {https://drive.google.com/file/d/1dGyl7V6PCyqly5P26RIGauC2x57bWzzr/view?usp=sharing}
}

@misc{allman2025signaling,
  title        = {From Signaling to Interviews in Random Matching Markets},
  author       = {Allman, Maxwell and Ashlagi, Itai and Saberi, Amin and Yu, Sophie H.},
  year         = {2025},
  month        = {January},
  eprint       = {2501.14159},
  archivePrefix= {arXiv},
  primaryClass = {cs.GT},
  howpublished = {\url{https://arxiv.org/abs/2501.14159}}
}

@unpublished{alekseeva2021demand,
  title        = {The Demand for AI Skills in the Labor Market},
  author       = {Alekseeva, Liudmila and Azar, José and Giné, Mireia and Samila, Sampsa and Taska, Bledi},
  year         = {2021},
  month        = {May},
  note         = {Available at SSRN: 3470610},
  url          = {https://ssrn.com/abstract=3470610}
}

@unpublished{chen2025mediocrity,
  title        = {Better Technology, Worse Motivation: GenAI’s Mediocrity Trap},
  author       = {Chen, Yvonne Jie and Gong, Jie and Li, Jin and Zhao, Zibo},
  year         = {2025},
  month        = {May},
  note         = {Available at SSRN: 5208163; Last revised: 19 June 2025},
  url          = {https://ssrn.com/abstract=5208163}
}

@article{stanton2025who,
  title   = {Who Benefits from Online Gig Economy Platforms?},
  author  = {Stanton, Christopher T. and Thomas, Catherine},
  journal = {American Economic Review},
  volume  = {115},
  number  = {6},
  pages   = {1857--1895},
  year    = {2025},
  month   = {June},
  doi     = {10.1257/aer.20221189},
  url     = {https://doi.org/10.1257/aer.20221189}
}

@techreport{dillon2025shifting,
  title       = {Shifting Work Patterns with Generative AI},
  author      = {Dillon, Eleanor W. and Jaffe, Sonia and Immorlica, Nicole and Stanton, Christopher T.},
  institution = {National Bureau of Economic Research},
  type        = {Working Paper},
  number      = {33795},
  year        = {2025},
  month       = {May},
  doi         = {10.3386/w33795},
  url         = {https://www.nber.org/papers/w33795}
}

@article{allen2021algorithm,
  title   = {Algorithm-Augmented Work and Domain Experience: The Countervailing Forces of Ability and Aversion},
  author  = {Allen, Ryan T. and Choudhury, Prithwiraj (Raj)},
  journal = {Organization Science},
  volume  = {33},
  number  = {1},
  pages   = {149--169},
  year    = {2022},
  month   = {January},
  doi     = {10.1287/orsc.2021.1554},
  url     = {https://doi.org/10.1287/orsc.2021.1554},
  publisher = {INFORMS}
}

@article{stanton2020gig,
  title   = {The Gig Economy Beyond Local Services and Transportation},
  author  = {Stanton, Christopher T. and Thomas, Catherine},
  journal = {CESifo Forum},
  volume  = {21},
  number  = {3},
  pages   = {21--26},
  year    = {2020},
  month   = {September},
  publisher = {ifo Institut – Leibniz-Institut für Wirtschaftsforschung an der Universität München},
  issn    = {2190-717X},
  url     = {http://hdl.handle.net/10419/226410}
}

@incollection{horton2017digital,
  author    = {Horton, John J. and Kerr, William R. and Stanton, Christopher T.},
  title     = {Digital Labor Markets and Global Talent Flows},
  booktitle = {High-Skilled Migration to the United States and Its Economic Consequences},
  editor    = {Borjas, George J. and Freeman, Richard B.},
  pages     = {71--108},
  publisher = {University of Chicago Press},
  year      = {2017},
  month     = {May},
  url       = {https://www.nber.org/books-and-chapters/high-skilled-migration-united-states-and-its-economic-consequences/digital-labor-markets-and-global-talent-flows}
}

@inproceedings{horton2010online,
  author    = {Horton, John J.},
  title     = {Online Labor Markets},
  booktitle = {Internet and Network Economics: 6th International Workshop, WINE 2010, Stanford, CA, USA, December 13-17, 2010. Proceedings},
  editor    = {Saberi, Amin},
  pages     = {515--522},
  publisher = {Springer Berlin Heidelberg},
  series    = {Lecture Notes in Computer Science},
  volume    = {6484},
  year      = {2010},
  month     = {December},
  doi       = {10.1007/978-3-642-17572-5_45},
  url       = {https://doi.org/10.1007/978-3-642-17572-5_45}
}

@article{chen2016are,
  author  = {Chen, Daniel L. and Horton, John J.},
  title   = {Are Online Labor Markets Spot Markets for Tasks? A Field Experiment on the Behavioral Response to Wage Cuts},
  journal = {Information Systems Research},
  volume  = {27},
  number  = {2},
  pages   = {403--423},
  year    = {2016},
  doi     = {10.1287/isre.2016.0649},
  url     = {https://doi.org/10.1287/isre.2016.0649},
  publisher = {INFORMS}
}
%\printbibliography

\newpage

\appendix
\section*{Appendix}

\section{Tables and Figures}

\setcounter{table}{0}
\renewcommand{\thetable}{A\arabic{table}}
\setcounter{figure}{0}
\renewcommand{\thefigure}{A\arabic{figure}}

\begin{table}[htbp]
\centering
\caption{Example Similarity Scores}
\label{tab:similarity_scores}

% ---- Brief centered at top, in smaller font ----
\begin{minipage}{0.95\textwidth}
\centering
\small
\textbf{Brief:} I need to add a horizontal scrolling section to a WordPress website.
User is scrolling vertical (up \& down) while content is moving left and right.
Solution: Plain HTML/CSS/JS is ok. Even better if we can use class/ID on the WPBakery Page Builder row to activate.
Has to work on the demo site.

I have some examples that can be copied (CSS and JS methods). Demo site: [LINK REDACTED]
\end{minipage}

\vspace{0.5em}
\begingroup
\scriptsize
\setlength{\tabcolsep}{4pt} 
\renewcommand{\arraystretch}{1.15}

\begin{tabular}{p{10.5cm} >{\centering\arraybackslash}m{1.8cm} >{\centering\arraybackslash}m{2.2cm}}
\toprule
\textbf{Proposal Text} & \textbf{Similarity Score} & \textbf{AI Bid Writer} \\
\midrule

Hello! I understand that you need to add a horizontal scrolling section to your WordPress website, but the user needs to be able to scroll up and down while content is moving left and right. I also check the attached url with reference and I can make this perfectly.

My name is Sandeep and I am a Full Stack Developer with over 7+ years of experience in the industry. I specialize in larger web applications and software development, as well as larger website builds.

My solution is to use class/ID on the WPBakery Page Builder row to activate the section. This method works on the demo site and doesn't require any modifications to the site's code.

If you are interested in my services please feel free to take a look at my portfolio or contact me directly for more information. Thank you for your consideration!
& 0.541 & Yes \\

\midrule

Hello
I am a senior Web and Mobile app developer and I have worked on many jobs similar to this one. I understand that you need to add a horizontal scrolling section to your wordpress website, but instead of just moving from left to right, the content needs to move from left to right while the user is still able to scroll up and down through the section. My skillset allows me to tackle jobs like this and bring the user an exciting experience.

I am confident that my expertise can help bring this job to success. If given the chance, I would love to work on this site with you! Please don't hesitate to reach out if you have any questions or would like more information about me or my skillset.

Thank you for your consideration!
& 0.354 & Yes \\
\midrule

Hi,
I can help u to add Vertical Scrolling function to ur WP Website
I checked the demo site and found that It's easy task for my team because we had 10 + years experience in it
Plz chat now bro !
& 0.339 & No \\
\midrule

"Hi Joshua M.!

Because of my considerable experience in website development, I believe I would be a wonderful fit for this
position. I have examined every detail, and everything appears doable, but if you could start the chat, I have a few
questions.

Here's what I can bring to your job: 1) Top-notch experience 2) Knowledge of best practices for data security 3)
24/7 communication: I will keep you updated with job's status and consistent updates 4) Someone who cares
about helping you succeed and bring more value to your organization

I am looking forward to hearing more about your exciting job!"
& 0.001 & No \\
\bottomrule
\end{tabular}
\endgroup

\begin{tablenotes}
    \small \item \textit{Table notes:} A selection of four written proposals on an example job. Similarity Score is the measure of cover letter quality, computed as the cosine similarity between the TF-IDF vectors of the job description and the proposal. AI Bid Writer indicates whether the bid was written with the help of the platform-native AI tool. 
\end{tablenotes}
\end{table}

\begin{table}[hbtp]
\centering
\caption{Signaling Power of Cover Letters: Pre vs. Post AI}
\label{tab:ge}
\begin{tabular}{lcc}
\toprule
 & Interview & Offer \\ 
\hline
$\text{Cover Letter Tailoring}_{ijt}$ &  0.0409*** & 0.0082*** \\ 
& (0.0045) & (0.0009) \\ 
$\text{Post}_t\times \text{Cover Letter Tailoring}_{ijt}$  & -0.0193*** & -0.0063***  \\
& (0.0042) & (0.0009) \\
\hline
Month FE & \checkmark & \checkmark  \\
Outcome Mean & 0.0647 & 0.0053 \\ 
\hline
\textit{N} & 5,499,707 & 5,499,707 \\
\bottomrule
\end{tabular}
\begin{tablenotes}
    \small \item \textit{Table notes:} Wage bids and experience levels are controlled for. Standard errors are reported in parentheses and clustered at the bidder level. *** p $<$ 0.01, ** p $<$ 0.05, * p $<$ 0.1.
    \end{tablenotes}
\end{table}

\begin{table}[hbtp]
\centering
\caption{Signaling Power of Cover Letters: Pre vs. Post AI}
\label{tab:geappendix2}
\begin{tabular}{lcc}
\toprule
 & Interview & Offer \\ 
\hline
$\text{Cover Letter Tailoring}_{ijt}$ & 0.0512*** & 0.0062*** \\ 
& (0.0034) & (0.0010) \\ 
$\text{Post}_t\times \text{Cover Letter Tailoring}_{ijt}$  & -0.0213*** & -0.0038***  \\
& (0.0034) & (0.0009) \\
\hline
Bidder FE & \checkmark & \checkmark  \\
Month FE & \checkmark & \checkmark  \\
Outcome Mean & 0.0647 & 0.0053 \\ 
\hline
\textit{N} & 5,499,707 & 5,499,707 \\
\bottomrule
\end{tabular}
\begin{tablenotes}
    \small \item \textit{Table notes:} Wage bids, experience levels and $\text{Post GPT}_t \times \text{Cover Letter Tailoring}_{ijt}$ are controlled for. Standard errors are reported in parentheses and clustered at the bidder level. *** p $<$ 0.01, ** p $<$ 0.05, * p $<$ 0.1.
    \end{tablenotes}

\end{table}

\clearpage
\newpage

\begin{figure}[htbp]
    \centering
    \caption{Post AI Roll-out, Cover Letters Persistently Less Predictive of Callback}
    \begin{minipage}{0.8\textwidth}
        \includegraphics[width=\linewidth]{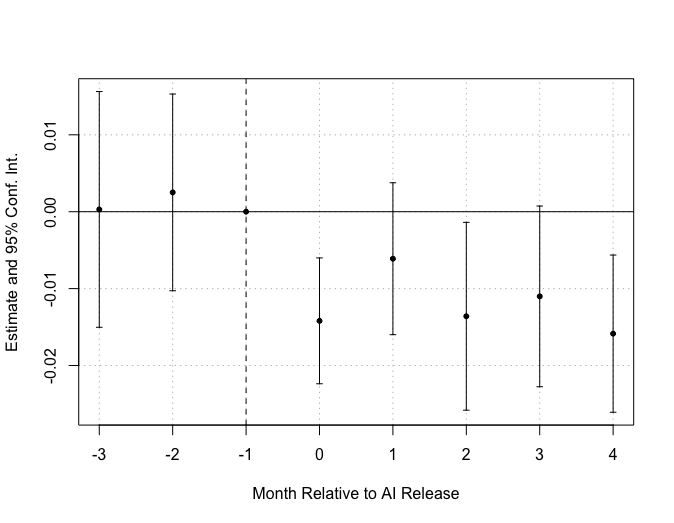}
        {\footnotesize Figure Notes: Interaction term event study coefficients on Month $\times$ Textual Similarity with interview as the outcome. Wages, experience levels, and month FEs are controlled for. Standard errors clustered at the bidder level.\par}
    \end{minipage}
    \label{fig:cont_event}
\end{figure}

\begin{figure}[htbp]
    \centering
    \caption{Post AI Roll-out, Cover Letters Persistently Less Predictive of Hiring}
    \begin{minipage}{0.8\textwidth}
        \includegraphics[width=\linewidth]{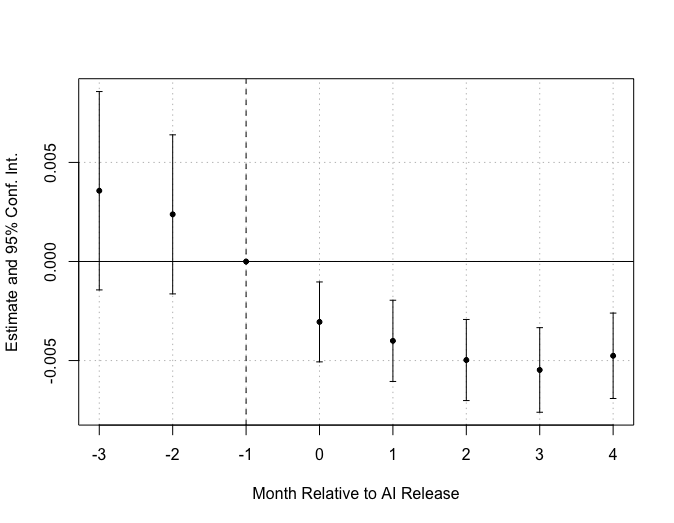}
        {\footnotesize Figure Notes: Interaction term event study coefficients on Month $\times$ Textual Similarity with being hired as the outcome. Wages, experience levels, and month FEs are controlled for. Standard errors clustered at the bidder level. \par}
    \end{minipage}
    \label{fig:cont_event_winner}
\end{figure}

\begin{comment}
\begin{table}[hbtp]
\centering
\caption{Power of Bid Score: Pre vs. Post AI}
\label{tab:gerankappendix}
\begin{tabular}{lcc}
\toprule
 & Interview & Offer \\ 
\hline
$\text{Bid Score}_{ijt}$ & -0.07412*** & -0.0213*** \\ 
& (0.0000) & (0.0000) \\ 
$\text{Post AI}_t\times \text{Bid Score}_{ijt}$  & 0.0155*** & -0.0008  \\
& (0.0047) & (0.3322) \\
\hline
Bidder FE & \checkmark & \checkmark  \\
Month FE & \checkmark & \checkmark  \\
Outcome Mean & 0.0647 & 0.0053 \\ 
\hline
\textit{N} & 5,499,707 & 5,499,707 \\
\bottomrule
\end{tabular}
\begin{tablenotes}
    \small \item \textit{Table notes:} Bid score is a platform-determined score (as a function of past reputation), which governs the relative display order of bids to employers. Wage bids are controlled for. Standard errors are reported in parentheses and clustered at the bidder level. *** p $<$ 0.01, ** p $<$ 0.05, * p $<$ 0.1.
    \end{tablenotes}

\end{table}   
\end{comment}

\begin{comment}

\begin{figure}[htbp]
    \centering
    \caption{Post AI Roll-out, Bid Score More Predictive of Interviewing}
    \begin{minipage}{0.8\textwidth}
        \includegraphics[width=\linewidth]{}
        {\small \emph{Figure Notes}: Interaction term event study coefficients on Month $\times$ Bid Score with being interviewed as the outcome. Wages, bidder, and month FEs are controlled for. \par}
    \end{minipage}
    \label{fig:gerank_contappendix}
\end{figure}
\end{comment}

\begin{figure}[htbp]
  \centering
  \caption{Histogram of Mean Time Spent Editing}
  \includegraphics[width=\linewidth]{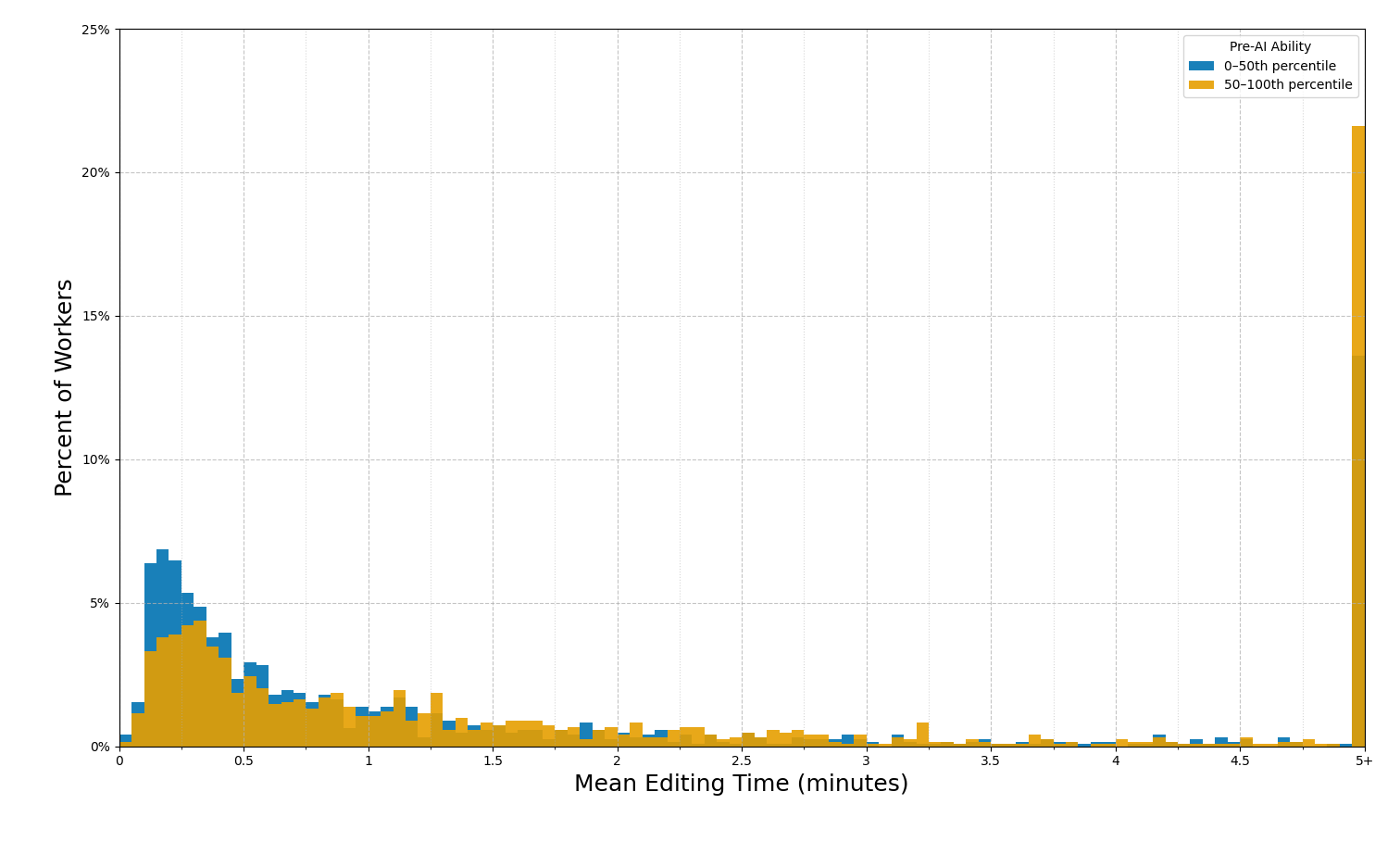}

  \label{fig:hist_mean_5plus_two}
      \medskip
       \begin{tablenotes}
    \small \item \textit{Figure notes:} Worker-level histogram. Each observation represents a worker’s average time, in minutes, between activating the tool and submitting a bid. Observations of 5 minutes or more are grouped in the final bin labeled ``5+''.
    \end{tablenotes}
\end{figure}
\clearpage
\newpage

\section{Proofs and Derivations}
\subsection{Screening Model}\label{sec:derivation}
\begin{itemize}
    \item Here we derive equations \ref{eq:exp_prod} and \ref{eq:func}. By the law of total expectations: 
\begin{gather*}
    \mathbb{E}[q_{i}\mid h_{ij}] = \mathbb{E}[q_{i}\mid h_{ij}, \rho_i=0]\mathbb{P}[\rho_i=0|h_{i,j}] + \mathbb{E}[q_{i}\mid h_{ij}, \rho_i=1]\mathbb{P}[\rho_i=1|h_{i,j}] \\
    \mathbb{E}[q_{i}\mid h_{ij}] = (\mu_0+\frac{\tau^{2}}{\tau^2+\sigma^2}(h_{ij}-\mu_{0}))(1-g(h_{ij}))+(\mu_0+\frac{\tau^{2}}{\tau^2+\sigma^2}(h_{ij}-\mu_{0}-A))g(h_{ij}) \\
    \mathbb{E}[q_{i}\mid h_{ij}] = \mu_0+\frac{\tau^{2}}{\tau^2+\sigma^2}(h_{ij}-\mu_{0} - Ag(h_{ij}))
\end{gather*}
which is equation \ref{eq:exp_prod}. Let be $f_1$ be the p.d.f. of cover letter quality conditional on having access to AI ($h \sim\mathcal{N}(\mu_0 +A,\tau^2+\sigma^2)$) and $f_0$ the p.d.f. conditional on not having access to the tool ($h \sim\mathcal{N}(\mu_0 ,\tau^2+\sigma^2)$). Bayes rule implies:
\begin{gather*}
    \mathbb{P}[\rho_i=1|h_{i,j}] = \frac{pf_1(h_{ij})}{pf_1(h_{ij})+(1-p)f_0(h_{ij})}
\end{gather*}
Rearranging the terms gives us:
\begin{gather*}
    \mathbb{P}[\rho_i=1|h_{i,j}] = \frac{1}{1+\frac{1-p}{p}\psi(h_{ij})}
\end{gather*}
where we define $\psi(h_{ij}):= \frac{f_{0}(h_{ij})}{f_{1}(h_{ij})}$. 
\begin{gather*}
    \psi(h_{ij}) = \frac{f_{0}(h_{ij})}{f_{1}(h_{ij})} = \exp \left\{\frac{-(h_{ij}-\mu_0)^2+(h_{ij}-\mu_0-A)^2}{2(\tau^2+\sigma^2)}\right\}=\exp \left\{\frac{-2A(h_{ij}-\mu_0)+A^2}{2(\tau^2+\sigma^2)}\right\}
\end{gather*}
This gives us \ref{eq:func}. \qed 
\item Now we move to prove the results from subsection \ref{subsec:hiring}. Let $\Lambda(h,A):=\mathbb{E}[q|h;A]$. Let $A^*>0$ be the effect of AI. We have 
\begin{equation*}
    \Lambda(q+A^*+\nu,A^*) - \Lambda(q+\nu,0) = \frac{\tau^2}{\tau^2+\sigma^2}\left(A^*(1-g(q+A^*+\nu))\right) > 0, \,\forall (q,\nu)
\end{equation*}
Define $l(x):=\frac{\exp\{x\}}{1+\exp\{x\}}$. Probability of being hired given $h$ and $A$ is $l(\Lambda(h,A))$. The probability of being hired conditional on productivity $q$ and on being treated is:
\begin{equation*}
    \mathbb{P}\!\left[\text{hire}\quad i \,\middle|\, q;\rho_i=1;A=A^*\right]= \int_{\nu}l(\Lambda(q+A^*+\nu,A^*))\phi(\nu)d\nu
\end{equation*}
where $\phi(\nu)$ is the pdf of $\nu$. The change in probability of being hired from $A=0$ to $A=A^*$ is:
\begin{gather*}
    \mathbb{P}\!\left[\text{hire}\quad i \,\middle|\, q;\rho_i=1;A=A^*\right]\, - \, \mathbb{P}\!\left[\text{hire}\quad i \,\middle|\, q;\rho_i=1;A=0\right]= \\\int_{\nu}l(\Lambda(q+A^*+\nu,A^*))\phi(\nu)d\nu - \int_{\nu}l(\Lambda(q+\nu,0))\phi(\nu)d\nu = \\
    \int_{\nu}\left[ \underbrace{l(\Lambda(q+A^*+\nu,A^*)) \,- \, l(\Lambda(q+\nu,0))}_{>0} \right]\phi(\nu)d\nu>0
\end{gather*}
Since $l(\cdot)$ is a strictly increasing function and $\Lambda(q+A^*+\nu,A^*) > \Lambda(q+\nu,0)$ the integrand is positive a.s., hence the integral is $>0$. The result for the control group is analogous.  \qed

Finally, consider the following definitions:
\begin{gather*}
    \Delta(q):=\mathbb{P}\!\left[\text{hire}\quad i \,\middle|\, q;\rho_i=1;A=A^*\right]\, - \, \mathbb{P}\!\left[\text{hire}\quad i \,\middle|\, q;\rho_i=1;A=0\right] \\
    \alpha(q,\nu) := \Lambda(q+A^*+\nu,A^*) \\
    \beta(q,\nu) := \Lambda(q+\nu,0) \\
    m(q,\nu)\,:=\,l(\alpha(q,\nu)) \,- \, l(\beta(q,\nu)) \\
    \kappa:=\frac{\tau^2}{\tau^2+\sigma^2}\, ,\,V:=\tau^2+\sigma^2
\end{gather*}
\begin{claim}[Monotonicity]\label{claim:mon}
Fix $q$. There exists $\nu_q$ that is decreasing in $q$ such that for all $\nu>\nu_q$,
$\frac{\partial m(q,\nu)}{\partial q}<0$.
\end{claim}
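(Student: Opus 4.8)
The plan is to differentiate $m$ in $q$ directly, reduce the sign of the derivative to a comparison of the logistic slope $l'$ at the two points $\alpha$ and $\beta$, and then pick the threshold $\nu_q$ so that both points land on the strictly decreasing branch of $l'$. First I would note that $h_1:=q+A^*+\nu$ and $h_0:=q+\nu$ each move one-for-one with $q$, and that $\Lambda(h,A^*)=\mu_0+\kappa\big(h-\mu_0-A^*g(h)\big)$ while $\Lambda(h,0)=\mu_0+\kappa(h-\mu_0)$. Hence $\partial_q\alpha=\kappa\big(1-A^*g'(h_1)\big)$ and $\partial_q\beta=\kappa$, so that
\begin{equation*}
\partial_q m(q,\nu)=\kappa\Big[\,l'(\alpha)\big(1-A^*g'(h_1)\big)-l'(\beta)\,\Big].
\end{equation*}
Because $\kappa>0$, the sign is that of the bracket, and the claim reduces to showing $l'(\alpha)\big(1-A^*g'(h_1)\big)<l'(\beta)$ for all $\nu$ above a threshold decreasing in $q$.

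Next I would record two structural facts that do the heavy lifting. First, $\alpha-\beta=\kappa A^*\big(1-g(h_1)\big)>0$ for every finite $h_1$ (this is the same inequality already used to establish $m>0$), so $\alpha>\beta$ throughout. Second, $g$ is increasing, so $A^*g'(h_1)\ge 0$ and the discount multiplier satisfies $1-A^*g'(h_1)\le 1$. Together these mean the only remaining task is to convert the ordering $\alpha>\beta$ into the reversed ordering $l'(\alpha)<l'(\beta)$, which holds exactly where $l'$ is decreasing.

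To pin the threshold I would use that $l'(x)=l(x)\big(1-l(x)\big)$ satisfies $l''(x)=l'(x)\big(1-2l(x)\big)<0$ for $x>0$, so $l'$ is strictly decreasing on $[0,\infty)$. I therefore force $\beta\ge 0$; solving $\mu_0+\kappa(q+\nu-\mu_0)\ge 0$ yields
\begin{equation*}
\nu\ge \nu_q:=\mu_0\Big(1-\tfrac{1}{\kappa}\Big)-q,
\end{equation*}
which is linear and strictly decreasing in $q$, precisely the threshold the claim requires. For any $\nu>\nu_q$ we then have $\alpha>\beta>0$, both on the decreasing branch, so $l'(\alpha)<l'(\beta)$. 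Combining the pieces: if $1-A^*g'(h_1)\in[0,1]$ then $l'(\alpha)\big(1-A^*g'(h_1)\big)\le l'(\alpha)<l'(\beta)$, while if $1-A^*g'(h_1)<0$ the left side is negative and $l'(\beta)>0$; in either case the bracket is strictly negative and $\partial_q m<0$.

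The hard part is mostly conceptual rather than computational: the temptation is to run asymptotic expansions of $\alpha,\beta$ and $g'$ as $\nu\to\infty$, but the clean observation is that simply landing on the decreasing branch of $l'$ (guaranteed by $\beta\ge 0$) suffices, and that the AI-discount multiplier $1-A^*g'(h_1)$ can only reinforce the inequality, never reverse it. The remaining care points are strictness—verifying that $\alpha>\beta$ is strict (true since $g(h_1)<1$) and that $l'$ is strictly, not merely weakly, decreasing on $[0,\infty)$—and confirming that $\nu_q$ as defined genuinely decreases in $q$, which is immediate from its slope $-1$.
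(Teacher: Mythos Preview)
Your proof is correct and follows essentially the same route as the paper: compute $\partial_q m=\kappa\big[l'(\alpha)(1-A^*g'(h_1))-l'(\beta)\big]$, choose $\nu_q=\mu_0(1-1/\kappa)-q$ so that $\beta>0$ (and hence $\alpha>\beta>0$), and then use that $l'$ is strictly decreasing on $[0,\infty)$ together with a case split on the sign of the multiplier $1-A^*g'(h_1)$. The only cosmetic differences are that the paper writes the multiplier as $1-\frac{(A^*)^2}{V}g(1-g)$ (which is your $1-A^*g'(h_1)$ after differentiating $g$), and in the nonnegative-multiplier case the paper bounds $l'(\alpha)\cdot\text{mult}\le l'(\beta)\cdot\text{mult}$ whereas you bound $l'(\alpha)\cdot\text{mult}\le l'(\alpha)$; both chains finish the argument the same way.
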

\begin{proof}
We have:
\begin{gather*}
    \frac{\partial \alpha(q,\nu)}{\partial q} = \kappa \left(1-\frac{(A^*)^2}{V}g(q+A^*+\nu;A^*)(1-g(q+A^*+\nu;A^*))\right) \\
    \frac{\partial \beta(q,\nu)}{\partial q} = \kappa
\end{gather*}
which gives us:
\begin{gather*}
    \frac{\partial m(q,\nu)}{\partial q} = \kappa \left[l'(\alpha)\left(1-\frac{(A^*)^2}{V}g(q+A^*+\nu;A^*)(1-g(q+A^*+\nu;A^*)\right)-l'(\beta)\right]
\end{gather*}
where $l'(x)>0$ and is strictly decreasing for $x>0$.  

Let $\nu_{q} = -\frac{\mu_0}{\kappa}+\mu_0-q$. For any $\nu>\nu_q$ we have $\beta(q,\nu)>0$ (and $\alpha(q,\nu)>0$, consequently). Then, if $\left(1-\frac{(A^*)^2}{V}g(q+A^*+\nu;A^*)(1-g(q+A^*+\nu;A^*)\right) \geq 0$:
\begin{gather*}
    \frac{\partial m(q,\nu)}{\partial q} \leq \kappa \left[l'(\beta)\left(1-\frac{(A^*)^2}{V}g(q+A^*+\nu;A^*)(1-g(q+A^*+\nu;A^*)\right)-l'(\beta)\right] \\
     = -\kappa\left[l'(\beta)\frac{(A^*)^2}{V}g(q+A^*+\nu;A^*)(1-g(q+A^*+\nu;A^*)\right] < 0
\end{gather*}
And the result follows immediately if $\left(1-\frac{(A^*)^2}{V}g(q+A^*+\nu;A^*)(1-g(q+A^*+\nu;A^*)\right) < 0$
\end{proof}

Claim \ref{claim:mon} implies that, by dominated convergence, there is a $\hat{q}$ such that 
\begin{equation*}
    \forall q>\hat{q}, \Delta'(q) < 0.
\end{equation*}
The result for the control group is analogous. \qed 
\end{itemize}

\subsection{Econometric Assumptions} \label{sec:derivation_metric}
\begin{itemize}
\begin{comment}

    \item We want to prove Assumption \ref{ass:stabilized} is equivalent to:
    \begin{equation}\label{eq:gpt_pt}
        \mathbb{E}[Y_{i,t}(1,0)-Y_{i,t-1}(1,0)\mid D_i = 1] = \mathbb{E}[Y_{i,t}(1,0)-Y_{i,t-1}(1,0)\mid D_i=0]\, , \, \forall t \geq t^*
    \end{equation}
    Let
    \begin{equation*}
        \mathcal{G}_{s} := \mathbb{E}\!\left[\,Y_{i,s}(1,0)-Y_{i,t^*-1}(0,0)\,\middle|\,D_i=1\,\right]
-
\mathbb{E}\!\left[\,Y_{i,s}(1,0)-Y_{i,t^*-1}(0,0)\,\middle|\,D_i=0\,\right]
    \end{equation*}
    From \ref{ass:stabilized} we have $\mathcal{G}_{s}=\kappa, \forall s\geq t^*-1$. This implies
    \begin{equation*}
        \mathcal{G}_{t} - \mathcal{G}_{t-1} =0, \forall t \geq t^* 
    \end{equation*}
    which establishes post-GPT parallel trends (\ref{eq:gpt_pt}). The other direction is immediate: if differences are equal period-to-period, then the period-to-baseline gap is constant. \qed
\end{comment}
    \item Let $\Delta_t^g:= \mathbb{E}[Y_{i,t}^g|\text{Access}_i=1] - \mathbb{E}[Y_{i,t}^g|\text{Access}_i=0]$, and $\hat{\Delta}^g_t$ the sample version. Also, let $\mathcal{T}_p$ be the number of periods under the platform tool. Consider the following standard two-way fixed effect regression:
    \begin{equation*}
       Y_{i,t}^g = \delta_i + \gamma_t + \beta \text{PostAI}_t\times\text{Access}_i+ \lambda \text{PostGPT}_t\times\text{Access}_i +  \epsilon_{i,t}
    \end{equation*}
    The estimator $\hat{\beta}$ can be written as\footnote{By Frisch-Waugh-Lovell, the OLS coefficient on $\text{Access}_i \times \text{PostAI}_t$ equals the slope from regressing the residualized $Y_{i,t}^g$ on the residualized $\text{Access}_i \times \text{PostAI}_t$ after partialling out $(\delta_i,\gamma_i,\text{PostGPT}_t\times\text{Access}_i)$. In a balanced panel, this reduces to the average post difference $\frac{\sum_{t\ge t^*}\hat{\Delta}^1_t}{\mathcal{T}_p}$ minus the GPT-4 baseline $\hat{\Delta}^1_{t^*-1}$.}:
    \begin{equation*}
        \hat{\beta} = \frac{\sum_{t\geq t^*}\hat{\Delta}^1_t}{\mathcal{T}_p} - \hat{\Delta}^1_{t^*-1}
    \end{equation*}
    Under LLN, this converges to 
    \begin{equation*}
        \frac{\sum_{t\geq t^*}\Delta^1_t}{\mathcal{T}_p} - \Delta^1_{t^*-1}
    \end{equation*}
    Under assumptions \ref{ass:stabilized}, \ref{ass:no_ant} and SUTVA this is equivalent to 
    \begin{equation*}
        \frac{\sum_{t\geq t^*}\mathbb{E}[Y_{i,t}^1(1) - Y_{i,t}^1(0)|\text{Access}_i=1]}{\mathcal{T}_p} 
    \end{equation*}
    In our case (with \ref{ass:stabilized} and \ref{ass:no_ant}, but without SUTVA) this is equivalent to:
    \begin{gather*}
        \frac{\sum_{t\geq t^*}\mathbb{E}[Y^1_{i,t}(1,m_1,0)-Y^1_{i,t}(0,m_1,0)|\text{Access}_i=1]}{\mathcal{T}_p} +\\
        \frac{\sum_{t\geq t^*}\mathbb{E}[Y^1_{i,t^*-1}(0,m_1,1)-Y^1_{i,t^*-1}(0,m_1,0)|\text{Access}_i=0]}{\mathcal{T}_p}+\\ \frac{\sum_{t\geq t^*}\mathbb{E}[Y_{i,t^*-1}^1(0,m_1,0)-Y_{i,t^*-1}^1(0,m_0,0)\mid \text{Access}_i=1]+ \mathbb{E}[Y_{i,t^*-1}^1(0,m_1,0)-Y_{i,t^*-1}^1(0,m_0,0)\mid \text{Access}_i=0]}{\mathcal{T}_p} 
    \end{gather*}
\end{itemize}

%\includepdf[pages=-]{table_a1.pdf}

\end{document}